\documentclass{IEEEtran}
\usepackage{subcaption}

\input{def_commands.tex}

\title{Compositional Small-Gain and Small-Phase Stability Analysis}
\author{Anton Ponomarev\thanks{Anton Ponomarev is with the Karlsruhe Institute of Technology, Germany (e-mail: anton.ponomarev@kit.edu).}}

\begin{document}

\maketitle

\begin{abstract}
   We adapt the small-gain and small-phase stability analysis to systems composed of MIMO LTI subsystems via an iteration of the series, parallel, and feedback interconnections. Based on a certain set of parameters (including the phase sector and the Crawford number) of the subsystems, we bound the same set of parameters of each consecutive interconnection. This is called ``composition of bounds\rlap{.}'' Composing the gains of subsystems in the same manner enables the small-gain-or-phase analysis of fairly complex systems. The method is illustrated with two examples that are not tractable by the classical small-phase analysis, and one other example that nevertheless benefits from the compositional approach.
\end{abstract}

\section{Introduction}

The \emph{small-gain and small-phase} stability analysis of MIMO LTI systems~\cite{zhaoWhenSmallGain2022} combines the classical \emph{small-gain} theorem~\cite{zamesInputoutputStabilityTimevarying1966a} and the \emph{small-phase} theorem~\cite{wangPhasesComplexMatrix2020}. It has already found an application in the analysis of large-scale electrical grids~\cite{huangGainPhaseDecentralized2024}.

\subsection{Does Small-Phase Theorem Apply to Complex Systems?}

Since its inception, the small-gain technique has been successfully extended to complex networks~\cite{vidyasagarInputoutputAnalysisLargescale1981,dashkovskiySmallGainTheorems2010}, but similar extension of the small-phase theorem meets with a difficulty; let us explain it at an intuitive level. In the linear case, the gain is a matrix norm $\norm{\cdot}$. The small-gain theorem relies on the inclusion of the spectrum $\sigma(\cdot)$ of the open-loop transfer matrix $\bfA$ in the norm-disk:
\begin{equation}
   \sigma(\bfA) \subseteq \set[\big]{z \in \bbC \colon \abs{z} \leq \norm{\bfA}}.
\end{equation}
The small-phase theorem similarly uses a spectral inclusion but this time in a sector of the complex plane covered by the \emph{angular field of values} $\calW'(\bfA)$~\cite[Definition~1.1.2]{hornTopicsMatrixAnalysis1991}:
\begin{equation}
   \label{eq: spectral inclusion sector}
   \sigma(\bfA) \subseteq \calW'(\bfA).
\end{equation}
If there are two matrices in the loop $(\bfA = \bfA_1 \bfA_2)$, the small-gain theorem employs the submultiplicativity of the matrix norm to bound the open-loop gain as
\begin{equation}
   \label{eq: norm submultiplicativity}
   \norm{\bfA_1 \bfA_2} \leq \norm{\bfA_1} \norm{\bfA_2}.
\end{equation}
The angular field of values, however, is not submultiplicative in general:
\begin{equation}
   \label{eq: angular field not submultiplicative}
   \calW'(\bfA_1 \bfA_2) \not\subseteq \calW'(\bfA_1) \calW'(\bfA_2).
\end{equation}
The small-phase theorem instead resorts to another spectral inclusion~\cite[Theorem~1.7.8]{hornTopicsMatrixAnalysis1991}
\begin{equation}
   \label{eq: spectral inclusion sector 2}
   \sigma(\bfA_1 \bfA_2) \subseteq \calW'(\bfA_1) \calW'(\bfA_2)
\end{equation}
(assuming that at least one matrix is sectorial). With three or more subsystems in the loop, the small-gain theorem easily generalizes by \emph{iteration} of~\eqref{eq: norm submultiplicativity}. The sectorial inclusion~\eqref{eq: spectral inclusion sector 2}, on the other hand, does not allow iteration. This is the obstacle that halts the sequence~\eqref{eq: spectral inclusion sector}, \eqref{eq: spectral inclusion sector 2}, \dots

\subsection{Segmental Phase}

Recently, a \emph{segmental} variant of the small-phase theorem has been introduced~\cite{chenCyclicSmallPhase2026}. Roughly speaking, it generalizes the spectral inclusion~\eqref{eq: spectral inclusion sector 2} to more subsystems but with a ``product of segments'' on the right-hand side~-- for the actual statement, see~\cite[Theorem~1]{chenCyclicSmallPhase2026}. It should be noted that, whereas the classical small-phase theorem is only meaningful for sectorial matrices, its segmental version does not have this limitation.

\subsection{Proposed Approach}

In the present study, we take a different route: instead of looking for a way to continue the sequence~\eqref{eq: spectral inclusion sector}, \eqref{eq: spectral inclusion sector 2}, \dots, we try to establish an analogue of the submultiplicativity property~\eqref{eq: norm submultiplicativity} for the classical sectorial phase definition.

Why should such an analogue be attainable? The intuitive grounds are seen from two trivial observations:
\begin{enumerate}
   \item The inclusion in~\eqref{eq: angular field not submultiplicative} does hold if the matrices are scalar, so it should hold (approximately) if they are ``close to scalar'' as well. In the phase theory context, a natural way to formalize a matrix being ``close to scalar'' is to say that its \emph{numerical range} is ``not too spread out\rlap{.}''
   \item What the examples of non-submultiplicativity~\eqref{eq: angular field not submultiplicative} in~\cite[Sec.~1.7]{hornTopicsMatrixAnalysis1991} have in common is the fact that the numerical range there contains the origin. This suggests that submultiplicativity may hold (approximately) if the numerical range is ``far enough from the origin\rlap{.}''
\end{enumerate}
We capture the notions of the numerical range being ``not too spread out'' and ``far enough from the origin'' in the concept of the \emph{Crawford defect} (Definition~\ref{de: crawford defect}) built on top of the \emph{Crawford number} $c(\cdot)$ (the distance from the origin to the numerical range). With this notion, an analogue of~\eqref{eq: norm submultiplicativity} is established as follows: we define a set of 6 functions $\calS(\cdot)$ of a matrix and show that $\calS(\bfA_1 \bfA_2)$ can be bounded in terms of $\calS(\bfA_1)$ and $\calS(\bfA_2)$. In particular, $\calS(\cdot)$ includes the Crawford number $c(\cdot)$ and the classical phase sector $\Phi(\cdot)$.  This result is stated as Theorem~\ref{th: series}~-- our main technical contribution.

Under the restrictions of Theorem~\ref{th: series} (essentially, the product of the Crawford defects should be small), the standard small-phase theorem can now be applied to an arbitrary number of subsystems in the feedback loop.

The approach is extended beyond a single loop via the \emph{compositional} idea. Note that the gain of a series, parallel, or feedback interconnection of several subsystems can be bounded via the gains of the subsystems~-- we call it the \emph{composition} of the norm bounds\footnote{We cannot provide a reference to an explicit formulation of the compositional small-gain analysis, but there are cases of it being used, e.g., \cite{mauryaMarineVehiclePath2009}.} (Theorem~\ref{th: norm compositions}) and define a similar operation for the bounds on our set $\calS(\cdot)$. In addition to the already mentioned series composition (Theorem~\ref{th: series}), we establish the parallel (Theorem~\ref{th: parallel}) and feedback composition results (Section~\ref{se: feedback}). With these, the small-gain and small-phase stability conditions can be applied to an iterated composition of series, parallel, and feedback interconnections~-- this is our main conceptual contribution.

\subsection{Some Comments}

It appears that our results are closely related to the segmental phase idea mentioned above. These relations are discussed in Section~\ref{se: normalized range} where we suggest a potential way to combine the sectorial and segmental approaches. This may reduce the conservatism as illustrated in the example of Section~\ref{se: example 3}.

Another example (Section~\ref{se: example 1}) demonstrates an application of the compositional approach to a single-loop system with near-unity feedback where the small-gain and small-phase theorems would normally be applied directly. However, with the compositional approach we find a larger stability domain: it turns out that the compositional \emph{small}-gain and \emph{small}-phase analysis in this case asserts stability under a \emph{high}-gain condition\footnote{Although a high-gain system may raise questions of noise sensitivity, there exist established methods of high-gain stabilization~\cite{ilchmannNonidentifierbasedHighgainAdaptive1993,bergerFunnelControlNonlinear2021}.}. This fact may seem surprising but has an intuitive explanation: we deal with the notion of the numerical range being ``far enough from the origin'' which essentially means that ``the gain is high'' in a strong sense: it implies that the norm is large but is a stronger property than that. By way of the inversion of the forward-path matrix in the feedback formula, high gain turns into a small one.

The latter example reinforces the crucial role played by the Crawford number $c(\cdot)$: it quantifies how ``strongly high-gain'' a matrix is. Together with the numerical radius $r(\cdot)$ and phase sector $\Phi(\cdot)$, the Crawford number bounds the numerical range to a ring sector~-- these bounds we call \emph{ring-sectorial} (Fig.~\ref{fig: ring-sectorial bounds}). Replacing the pure sectorial bounds of the small-phase theory with \emph{ring}-sectorial bounds underpins our approach.

\subsection{Structure of the Paper}

Section~\ref{se: notation} introduces the notation; the unusual part here may be the definitions of phases, phase arcs, and operations on them. In Section~\ref{se: problem}, the considered set of systems (\emph{decomposable} systems) is defined, and the stability problem (\emph{internal} stability) is stated. In Section~\ref{se: preliminaries} we recall the familiar concepts of the small-gain and small-phase analysis. Section~\ref{se: main} presents the main results: compositional small-gain-or-phase analysis and the rules of composition. The results are discussed in Section~\ref{se: discussion}; we mainly mention the conservatism and possible ways to reduce it. Section~\ref{se: examples} contains three examples that are simple enough for analytical results to be attainable.

\section{Notation}
\label{se: notation}

\subsection{Standard Notation}

\begin{itemize}
   \item $\rmI$~-- $n \times n$ identity matrix;
   \item $j$~-- complex unit;
   \item $\norm{\cdot}$~-- 2-norm;
   \item $(\cdot)^*$~-- conjugate transpose;
   \item $\bfv \perp \bfw \iff \bfv^*\bfw = 0$;
   \item $\RHinfnn$~-- the space of real rational, proper, and stable $n\times n$ transfer matrices.
\end{itemize}
Given $S, S_1, S_2 \subset \bbC$:
\begin{itemize}
   \item $S^* = \set{z^* \colon z \in S}$;
   \item $\conv(S)$~-- convex hull;
   \item $\dist(S_1, S_2) = \inf\set[\big]{\abs{z_1-z_2} \colon z_{1,2} \in S_{1,2}}$;
   \item $S_1 + S_2 = \set{z_1 + z_2 \colon z_{1,2} \in S_{1,2}}$;
   \item $S_1 S_2 = \set{z_1 z_2 \colon z_{1,2} \in S_{1,2}}$.
\end{itemize}
The symbols $\conv$, $\dist$, and $+$ are redefined when they apply to phases and the arguments of complex numbers: there we adopt the following formalism.

\subsection{Phase-Related Notation}
\label{se: phases}

Typically, the object called ``a phase'' emerges as the argument of some complex number. Practical computations with phases require careful handling of the argument branches. The following notation is intended to circumvent these details.

Let the quotient group $\bbT = \bbR/(2\pi\bbZ)$ represent the space of \emph{phases}. As an equivalence class, phase $\theta \in \bbT$ has a representative $x \in \bbR$, denoted $\theta = [x]$.

The \emph{phase arc} covered by an interval $[x_1, x_2] \subset \bbR$ is
\begin{equation}
   \arc[x_1, x_2]
   = \set[\big]{[x] \colon x \in [x_1, x_2]} \subseteq \bbT.
\end{equation}

The following definitions are motivated by the isometry $[x] \mapsto \ee^{jx}$ between $\bbT$ and the unit circle $S^1 \subset \bbC$ endowed with the geodesic distance (the length of the shortest arc).

\emph{Metric} $\dist(\cdot, \cdot)$ is defined on $\bbT$ as
\begin{equation}
   \dist(\theta_1, \theta_2) = \min\set[\big]{\abs{x_1 - x_2} \colon
   [x_1] = \theta_1, [x_2] = \theta_2}.
\end{equation}
The distance between two sets $\Theta_1, \Theta_2 \subseteq \bbT$ is
\begin{equation}
   \dist(\Theta_1, \Theta_2) = \inf\set[\big]{\dist(\theta_1, \theta_2) \colon
   \theta_1 \in \Theta_1, \theta_2 \in \Theta_2}.
\end{equation}
The \emph{diameter} of a set $\Theta \subseteq \bbT$ is
\begin{equation}
   \diam\Theta = \sup\set[\big]{\dist(\theta_1, \theta_2) \colon
   \theta_1, \theta_2 \in \Theta}.
\end{equation}
The Hausdorff measure induced on $\bbT$ by $\dist(\cdot, \cdot)$ is called \emph{length} and denoted $\len(\cdot)$. In particular,
\begin{align}
   \diam\arc[x_1, x_2] &= \min\set{\pi, x_2 - x_1}, \\
   \len\arc[x_1, x_2] &= \min\set{2\pi, x_2 - x_1}.
\end{align}

Set $\Theta \subseteq \bbT$ is called \emph{convex} if every pair of its points can be connected by a minimizing $\bbT$-geodesic contained in $\Theta$. Among the closed subsets of $\bbT$, the only convex sets are $\bbT$ itself and $\arc[x_1, x_2]$ with $x_1 \leq x_2 \leq x_1 + \pi$. The \emph{convex hull} $\conv\Theta$ is the minimal convex superset of $\Theta$.

The \emph{sum} of two phase sets $\Theta_1, \Theta_2 \subseteq \bbT$ is
\begin{equation}
   \Theta_1 + \Theta_2 = \set[\big]{[x_1 + x_2] \colon
   [x_1] \in \Theta_1, [x_2] \in \Theta_2}
\end{equation}
and their \emph{join} is
\begin{equation}
   \Theta_1 \vee \Theta_2 = \conv(\Theta_1 \cup \Theta_2).
\end{equation}

\begin{figure}
   \centering
   \begin{subfigure}[b]{80mm}
      \centering
      \includegraphics{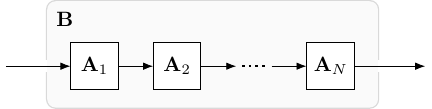}
      \caption{Series: $\bfB = \bfA_N \bfA_{N-1} \dots \bfA_1$.}
      \label{fig: connection series}
   \end{subfigure}\\
   \vspace{4mm}
   \begin{subfigure}[b]{80mm}
      \centering
      \includegraphics{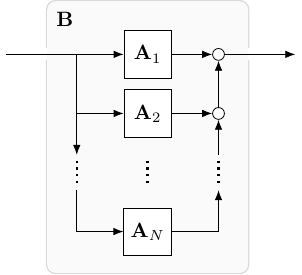}
      \caption{Parallel: $\bfB = \bfA_1 + \bfA_2 + \dots + \bfA_N$.}
      \label{fig: connection parallel}
   \end{subfigure}\\
   \vspace{4mm}
   \begin{subfigure}[b]{80mm}
      \centering
      \includegraphics{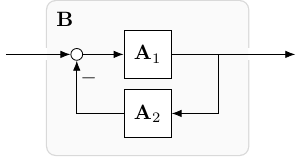}
      \caption{Feedback: $\bfB = (\rmI + \bfA_1 \bfA_2)\inv \bfA_1$.}
      \label{fig: connection feedback}
   \end{subfigure}
   \caption{Elementary interconnections.}
   \label{fig: connections}
\end{figure}

\begin{figure}
   \centering
   \begin{subfigure}[b]{82mm}
      \centering
      \includegraphics{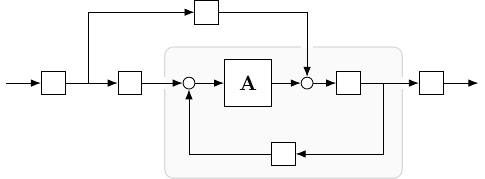}
      \caption{}
   \end{subfigure}\\
   \vspace{4mm}
   \begin{subfigure}[b]{82mm}
      \centering
      \includegraphics{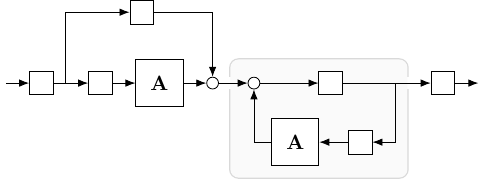}
      \caption{}
   \end{subfigure}
   \caption{System (a) is indecomposable because the feedback loop has two entry points (indeed, an $n \times 2n$ system cannot be obtained from $n \times n$ systems via elementary interconnections). However, the system can be transformed to the decomposable form (b) by pulling block $\bfA$ back before the adder (\emph{node splitting}) and swapping the two adders. See also Remark~\ref{re: node splitting}.}
   \label{fig: node splitting}
\end{figure}

\section{Problem Statement}
\label{se: problem}

In this paper we consider MIMO LTI systems and their interconnections (wirings) represented by block diagrams. For convenience, the terms ``system $\bfA$'' and ``transfer matrix $\bfA(s)$'' are used interchangeably.

\begin{definition}[Elementary interconnection]
   Consider $n \times n$ LTI systems $\bfA_1, \bfA_2, \dots, \bfA_N$. Their series, parallel, and feedback interconnections shown in Fig.~\ref{fig: connections} are called elementary.
\end{definition}

Systems resulting from elementary interconnections are again $n \times n$ and can participate in further interconnections.

\begin{definition}[Decomposable system]
   The system obtained as a composition of elementary interconnections is called decomposable.
\end{definition}

Some indecomposable systems can be rendered decomposable via block diagram transformations (for example, the \emph{node splitting}\footnote{The name is inspired by~\cite{janssenMakingGraphsReducible1997}.} transformation in Fig.~\ref{fig: node splitting}).

\begin{definition}[Internal stability]
   Given a control system in the block diagram form, consider its state-space realization obtained by wiring the minimal realizations of the blocks. If this realization is well-posed and asymptotically stable under zero input then the system is called internally stable (by extension of \cite[Definition~5.2]{zhouRobustOptimalControl1996}).
\end{definition}

\begin{remark}
   \label{re: node splitting}
   Some block diagram transformations (e.g., the one in Fig.~\ref{fig: node splitting}) rely on the duplication of subsystems and, therefore, of the state variables. Since the original system is a state-space restriction of the transformed one, internal stability of the latter implies that the original system is internally stable as well.
\end{remark}

We are concerned with the following.

\begin{problem}
   \label{problem}
   Given a decomposable LTI system constructed from $\RHinfnn$-subsystems, find sufficient conditions of its internal stability.
\end{problem}

\section{Preliminaries}
\label{se: preliminaries}

\subsection{Numerical Range and its Ring-Sectorial Bounds}
\label{se: ring sectorial bounds}

Let $\bfA \in \bbC^{n\times n}$. The \emph{numerical range} of $\bfA$ is the set
\begin{equation}
   \calW(\bfA) = \set*{
      \frac{\bfx^* \bfA \bfx}{\norm{\bfx}^2} \colon
      \bfx \in \bbC^n \setminus \set{0}
   } \subset \bbC
\end{equation}
which is closed, convex, and contains the spectrum of $\bfA$~\cite[Ch.~1]{hornTopicsMatrixAnalysis1991}. The \emph{numerical radius} of $\bfA$ is
\begin{equation}
   r(\bfA) = \max\set[\big]{
      \abs{z} \colon z \in \calW(\bfA)
   } \in [0, \norm{\bfA}],
\end{equation}
the \emph{Crawford number}~\cite{stewartPertubationBoundsDefinite1979} is
\begin{equation}
   c(\bfA) = \min\set[\big]{
      \abs{z} \colon z \in \calW(\bfA)
   } \in [0, r(\bfA)],
\end{equation}
and the \emph{phase sector}~\cite{zhaoWhenSmallGain2022} is the phase arc
\begin{equation}
   \Phi(\bfA) = \set[\big]{
      \arg{z} \colon z \in \calW(\bfA)
   } \subseteq \bbT.
\end{equation}
If $0 \in \calW(\bfA)$, we define $\Phi(\bfA) = \bbT$.

\begin{definition}[Sectorial matrix]
   Matrix $\bfA$ is called sectorial (with vertex $0$) if $0 \not\in \calW(\bfA)$~\cite[Section~V]{katoPerturbationTheoryLinear1984}. Two other equivalent definitions are $c(\bfA) > 0$ and $\diam\Phi(\bfA) < \pi$.
\end{definition}

A sectorial matrix is always invertible, and its inverse is also sectorial. The following relations will be useful.

\begin{lemma}
   \label{le: inverse bounds}
   Let $\bfA \in \bbC^{n\times n}$ be sectorial. Then
   \begin{subequations}
      \begin{gather}
         \frac{c(\bfA)}{\norm{\bfA}^2} \leq c(\bfA\inv)
         \leq \frac{r(\bfA)}{\norm{\bfA}^2}, \\
         \Phi(\bfA\inv) = -\Phi(\bfA).
      \end{gather}
   \end{subequations}
\end{lemma}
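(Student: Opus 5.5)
The natural tool is the congruence substitution $\bfx = \bfA\bfy$. It carries the numerical range of $\bfA\inv$ onto the conjugate of a reweighted numerical range of $\bfA$. For any nonzero $\bfx\in\bbC^n$ put $\bfy = \bfA\inv\bfx$, so $\bfx = \bfA\bfy$ and $\bfy\neq 0$ because $\bfA$ is invertible. Then
\begin{equation*}
   \frac{\bfx^*\bfA\inv\bfx}{\norm{\bfx}^2}
   = \frac{\bfy^*\bfA^*\bfy}{\norm{\bfA\bfy}^2}
   = \left(\frac{\bfy^*\bfA\bfy}{\norm{\bfy}^2}\right)^{*}\cdot\frac{\norm{\bfy}^2}{\norm{\bfA\bfy}^2}.
\end{equation*}
As $\bfx$ ranges over $\bbC^n\setminus\{0\}$, $\bfy$ ranges over the same set. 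Hence every point of $\calW(\bfA\inv)$ has the form $\lambda z^*$, with $z\in\calW(\bfA)$ and the positive real factor $\lambda = \norm{\bfy}^2/\norm{\bfA\bfy}^2$. Both claims of Lemma~\ref{le: inverse bounds} will be read off from this single identity.

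For the phase statement, note that multiplying by $\lambda>0$ does not change the argument. So $\arg$ of each point of $\calW(\bfA\inv)$ equals $-\arg z$ for some $z\in\calW(\bfA)$, with $z$ and $\bfy$ linked. Running the substitution in reverse ($\bfy$ arbitrary, $\bfx = \bfA\bfy$) shows every $z\in\calW(\bfA)$ is attained. This gives $\Phi(\bfA\inv) = -\Phi(\bfA)$ as sets in $\bbT$. Sectoriality guarantees $0\notin\calW(\bfA)$, and then also $0\notin\calW(\bfA\inv)$, so both phase sectors are the honest argument sets rather than the convention $\bbT$.

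For the modulus bounds, take absolute values: $|\bfx^*\bfA\inv\bfx|/\norm{\bfx}^2 = |z|\,\norm{\bfy}^2/\norm{\bfA\bfy}^2$. For the lower bound, use $|z|\ge c(\bfA)$ and $\norm{\bfA\bfy}\le\norm{\bfA}\norm{\bfy}$. Every point of $\calW(\bfA\inv)$ then has modulus at least $c(\bfA)/\norm{\bfA}^2$, and minimizing gives $c(\bfA\inv)\ge c(\bfA)/\norm{\bfA}^2$.

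For the upper bound it suffices to exhibit one point of $\calW(\bfA\inv)$ with small modulus. Choose $\bfy$ to be a top right singular vector of $\bfA$, so that $\norm{\bfA\bfy} = \norm{\bfA}\norm{\bfy}$. The corresponding point has modulus $|z|/\norm{\bfA}^2 \le r(\bfA)/\norm{\bfA}^2$, which gives $c(\bfA\inv)\le r(\bfA)/\norm{\bfA}^2$.

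The only step requiring care is this upper bound. We cannot simply minimize $|z|$ and $\norm{\bfy}^2/\norm{\bfA\bfy}^2$ independently, because both depend on the same $\bfy$. The fix is to pick $\bfy$ to extremize the norm ratio and then use only the crude bound $|z|\le r(\bfA)$, which is exactly why the upper bound involves $r(\bfA)$ rather than $c(\bfA)$.
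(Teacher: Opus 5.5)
Your argument is correct and complete. The substitution $\bfx=\bfA\bfy$ writes each point of $\calW(\bfA\inv)$ as $z^*\norm{\bfy}^2/\norm{\bfA\bfy}^2$, with $z$ ranging over all of $\calW(\bfA)$. The phase identity, the lower bound, and (via a top right singular vector) the upper bound then follow exactly as you say.

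The paper gives no proof of its own, only a pointer to Section~4 of Hochstenbach--Singer--Zachlin. Your reparametrization is the standard route, and it is the same identity $c(\bfA\inv)=\min_{\bfx\neq 0}\abs{\bfx^*\bfA\bfx}/\norm{\bfA\bfx}^2$ that the paper itself relies on in Lemma~\ref{le: orthogonal bound} and Appendix~\ref{ap: proof cinv near identity}. It therefore serves as a faithful self-contained substitute for the citation.
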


\begin{IEEEproof}
   See~\cite[Sec.~4]{hochstenbachNumericalApproximationField2013}.
\end{IEEEproof}

\begin{definition}[Ring-sectorial bounds]
   The numerical range $\calW(\bfA)$ is bounded by $r(\bfA)$, $c(\bfA)$, and $\Phi(\bfA)$ as shown in Fig.~\ref{fig: ring-sectorial bounds}. We call this type of bound ring-sectorial. The bounds in Fig.~\ref{fig: ring-sectorial bounds} are exact. Bounds $\hat r$, $\hat c$, $\hat\Phi$ are conservative if
   \begin{equation}
      \label{eq: conservative bounds}
      r(\bfA) \leq \hat r, \quad
      c(\bfA) \geq \hat c, \quad
      \Phi(\bfA) \subseteq \hat \Phi.
   \end{equation}
\end{definition}

\begin{remark}[Finding ring-sectorial bounds]
   Given a numerical matrix $\bfA$, its numerical range $\calW(\bfA)$ can be outer-bounded with arbitrary accuracy by a polygon~\cite{johnsonNumericalDeterminationField1978} yielding conservative ring-sectorial bounds. The phase sector can be computed directly as well~\cite[Section~2]{wangPhasesComplexMatrix2020}.
\end{remark}

\subsection{Stability of Elementary Interconnections}

If all subsystems $\bfA_k$ are of class $\RHinfnn$ then the series and parallel structures in Figs.~\ref{fig: connection series} and~\ref{fig: connection parallel} are internally stable; in particular, $\bfB \in \RHinfnn$.

For the feedback loop, we recall the following statement which is a special case of~\cite[Theorem~1]{zhaoWhenSmallGain2022}.

\begin{theorem}[Small-gain-or-phase]
   \label{th: small gain phase}
   Let $\bfA_1, \bfA_2 \in \RHinfnn$. The feedback loop in Fig.~\ref{fig: connection feedback} is internally stable, and in particular $\bfB \in \RHinfnn$, if at each frequency $\omega \in [0, \infty]$ at least one of the following conditions holds:
   \begin{subequations}
      \label{eq: small gain phase}
      \begin{align}
         \label{eq: small gain}
         & \norm{\bfA_1} \norm{\bfA_2} < 1 \\
         \label{eq: small phase}
         \text{or}\quad & \dist\p[\big]{[\pi],\: \Phi\p{\bfA_1} + \Phi\p{\bfA_2}} > 0
      \end{align}
   \end{subequations}
   where $\bfA_k = \bfA_k(j\omega)$.
\end{theorem}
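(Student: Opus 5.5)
The plan is to reduce Theorem~\ref{th: small gain phase} to the generalized Nyquist criterion for the loop in Fig.~\ref{fig: connection feedback} by means of a homotopy argument, in the spirit of~\cite{zhaoWhenSmallGain2022}.

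\emph{Setup.} Since $\bfA_1, \bfA_2 \in \RHinfnn$ and the realization is built from minimal realizations of stable blocks, the open loop has no unstable modes. Internal stability is therefore equivalent to two facts: well-posedness, i.e.\ $\det(\rmI + \bfA_1(\infty)\bfA_2(\infty)) \neq 0$, and the requirement that $\det(\rmI + \bfA_1(s)\bfA_2(s))$ have no zeros in the closed right half-plane. The latter holds when the Nyquist plot of $\det(\rmI + \bfA_1(j\omega)\bfA_2(j\omega))$ does not encircle or touch the origin.

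\emph{Homotopy.} Introduce the family $\det(\rmI + \tau\bfA_1(j\omega)\bfA_2(j\omega))$ for $\tau \in [0,1]$. At $\tau = 0$ it is identically $1$, so its winding number is zero. By continuity in $(\tau,\omega)$ on the compact set $[0,1]\times[0,\infty]$, with $\omega = \infty$ included by properness, the winding number stays zero provided that
\begin{equation}
   \det(\rmI + \tau\bfA_1(j\omega)\bfA_2(j\omega)) \neq 0
\end{equation}
for all $\tau\in[0,1]$ and all $\omega\in[0,\infty]$. The same condition at $\omega = \infty$ and $\tau = 1$ gives well-posedness. Equivalently, it suffices to show that no eigenvalue of $\bfA_1\bfA_2$ lies on the ray $(-\infty,-1]$ at any frequency.

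\emph{Pointwise exclusion.} Fix $\omega$ and let $\lambda \in \sigma(\bfA_1\bfA_2)$.
\begin{itemize}
   \item If \eqref{eq: small gain} holds, then $\abs{\lambda} \leq \norm{\bfA_1\bfA_2} \leq \norm{\bfA_1}\norm{\bfA_2} < 1$, so $\lambda \notin (-\infty,-1]$.
   \item If \eqref{eq: small phase} holds, then both $\Phi(\bfA_1)$ and $\Phi(\bfA_2)$ must be proper arcs, since otherwise $\Phi(\bfA_k) = \bbT$ and the distance to $[\pi]$ would be zero. So both matrices are sectorial.
\end{itemize}
In the sectorial case, apply the inclusion~\eqref{eq: spectral inclusion sector 2}, $\sigma(\bfA_1\bfA_2) \subseteq \calW'(\bfA_1)\calW'(\bfA_2)$. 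It gives $\lambda \neq 0$ and $\arg\lambda \in \Phi(\bfA_1) + \Phi(\bfA_2)$, which is a set at positive distance from $[\pi]$. Hence $\lambda$ is not a negative real number, so $\det(\rmI + \tau\bfA_1\bfA_2) \neq 0$ for every $\tau$.

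\emph{Main obstacle.} The delicate point is justifying the step from ``at each frequency, one of two conditions holds'' to a globally continuous homotopy with no zero crossing. This step works because the exclusion is pointwise and uniform in $\tau$; no switching argument between frequency regions is required.

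One also needs to check that the phase sum in~\eqref{eq: small phase} is interpreted consistently with~\eqref{eq: spectral inclusion sector 2}, so that arguments of products land in $\Phi(\bfA_1)+\Phi(\bfA_2)$ as a subset of $\bbT$. Here the $\bbT$-formalism of Section~\ref{se: phases} handles the branch issues automatically.

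Finally, stability of $\bfB = (\rmI+\bfA_1\bfA_2)\inv\bfA_1$ follows, since it is one of the internal transfer matrices of the loop.
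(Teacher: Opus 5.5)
Your argument is correct. The paper gives no proof of its own here: it just cites the theorem as a special case of \cite[Theorem~1]{zhaoWhenSmallGain2022}. What you wrote is a self-contained derivation along the standard lines for such results, in three steps.
\begin{enumerate}
\item For stable blocks, internal stability reduces to well-posedness plus the absence of closed right-half-plane zeros of $\det(\rmI+\bfA_1\bfA_2)$.
\item The gain homotopy $\det(\rmI+\tau\bfA_1\bfA_2)$, $\tau\in[0,1]$, carries the zero winding number over from $\tau=0$.
\item The frequency-wise hypothesis keeps the eigenvalues of $\bfA_1\bfA_2$ off $(-\infty,-1]$. Under small gain this follows from $\abs{\lambda}\le\norm{\bfA_1}\norm{\bfA_2}<1$; under small phase it follows from the sectorial inclusion~\eqref{eq: spectral inclusion sector 2}, after you correctly observe that this condition forces both matrices to be sectorial.
\end{enumerate}

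One detail needs tidying. The Nyquist contour is the whole imaginary axis, so the homotopy must run over $\omega\in[-\infty,\infty]$, not $[0,\infty]$. Your hypothesis on $[0,\infty]$ still suffices, because the $\bfA_k$ are real rational: $\det(\rmI+\tau\bfA_1(-j\omega)\bfA_2(-j\omega))$ is the complex conjugate of its value at $+j\omega$.

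Compared with the citation, your direct route checks the statement exactly in the paper's $\bbT$-valued form. It uses only $[\pi]\notin\Phi(\bfA_1)+\Phi(\bfA_2)$, with no branch or phase-center bookkeeping, so you never have to reconcile the paper's phase-sector formalism with the phase conventions of the cited work. The citation, for its part, places the result inside the more general phase theory developed there.
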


Condition~\eqref{eq: small gain} is called \emph{small-gain}, and~\eqref{eq: small phase} is called \emph{small-phase} (``smaller than $\pi$'').

\subsection{Compositional Small-Gain Analysis}
\label{se: compositional small gain}

Norm bounds are \emph{composable}: under some conditions, the norm $\norm{\bfB(j\omega)}$ of every elementary interconnection in Fig.~\ref{fig: connections} can be bounded if bounds on $\norm{\bfA_k(j\omega)}$ are available. This is summarized as follows.

\begin{theorem}[Norm bound compositions]
   \label{th: norm compositions}
   Suppose that all transfer matrices $\bfA_k(s)$ in Fig.~\ref{fig: connections}, evaluated at a given point $s = j\omega$, have known norm bounds
   \begin{equation}
      \norm{\bfA_k} \leq \hat\gamma_k.
   \end{equation}
   Then for the series interconnection (Fig.~\ref{fig: connection series})
   \begin{subequations}
      \begin{align}
         \bfB = \bfA_N \bfA_{N-1} \dots \bfA_1 &\implies
         \norm{\bfB} \leq \prod_{k=1}^N \hat\gamma_k, \\
         \intertext{for the parallel interconnection (Fig.~\ref{fig: connection parallel})}
         \bfB = \bfA_1 + \bfA_2 + \dots + \bfA_N &\implies
         \norm{\bfB} \leq \sum_{k=1}^N \hat\gamma_k, \\
         \intertext{and for the feedback interconnection (Fig.~\ref{fig: connection feedback}), under the small-gain condition $\hat\gamma_1 \hat\gamma_2 < 1$,}
         \label{eq: norm composition feedback}
         \bfB = (\rmI + \bfA_1 \bfA_2)\inv \bfA_1 &\implies
         \norm{\bfB} \leq \frac{1}{\hat\gamma_1\inv - \hat\gamma_2}.
      \end{align}
   \end{subequations}
\end{theorem}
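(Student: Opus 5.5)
The three bounds are proved separately at the fixed frequency $s = j\omega$, where each $\bfA_k$ is just a complex matrix. Only elementary properties of the induced 2-norm are needed: submultiplicativity, the triangle inequality, and the Neumann-series bound for $(\rmI + \bfX)\inv$.

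For the series interconnection, submultiplicativity of the induced norm gives
\[
\norm{\bfA_N \bfA_{N-1} \cdots \bfA_1} \leq \prod_{k=1}^N \norm{\bfA_k} \leq \prod_{k=1}^N \hat\gamma_k ,
\]
by a trivial induction on $N$. For the parallel interconnection, the triangle inequality gives $\norm{\sum_k \bfA_k} \leq \sum_k \norm{\bfA_k} \leq \sum_k \hat\gamma_k$.

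For the feedback interconnection, set $\bfX = \bfA_1 \bfA_2$. Then $\norm{\bfX} \leq \hat\gamma_1 \hat\gamma_2 < 1$, so $\rmI + \bfX$ is invertible, and the Neumann series $\sum_{m \ge 0} (-\bfX)^m$ converges to $(\rmI + \bfX)\inv$. This yields
\[
\norm{(\rmI + \bfX)\inv} \leq \frac{1}{1 - \norm{\bfX}} \leq \frac{1}{1 - \hat\gamma_1 \hat\gamma_2}.
\]
Alternatively, without series: for every $\bfx$, $\norm{(\rmI + \bfX)\bfx} \geq (1 - \norm{\bfX}) \norm{\bfx}$.

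Applying submultiplicativity once more,
\[
\norm{\bfB} \leq \frac{\hat\gamma_1}{1 - \hat\gamma_1 \hat\gamma_2} = \frac{1}{\hat\gamma_1\inv - \hat\gamma_2}.
\]
The rewriting requires $\hat\gamma_1 > 0$. If $\hat\gamma_1 = 0$, then $\bfA_1 = 0$ and $\bfB = 0$, and the bound holds trivially when the right-hand side is read as $0$.

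None of these steps is a real obstacle. The only points needing care are the invertibility of $\rmI + \bfA_1 \bfA_2$, which the small-gain hypothesis $\hat\gamma_1 \hat\gamma_2 < 1$ guarantees, and the degenerate case $\hat\gamma_1 = 0$.
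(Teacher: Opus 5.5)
Your proof is correct, and it matches what the paper relies on. The paper states this theorem without proof, as a standard consequence of submultiplicativity, the triangle inequality, and the estimate $\norm{(\rmI + \bfA_1\bfA_2)\inv} \leq (1 - \norm{\bfA_1\bfA_2})\inv$ under the small-gain condition, which is exactly your argument, including your handling of invertibility and the degenerate case $\hat\gamma_1 = 0$.
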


Combining the small-gain part~\eqref{eq: small gain} of Theorem~\ref{th: small gain phase} with Theorem~\ref{th: norm compositions}, the following \emph{compositional} small-gain approach to Problem~\ref{problem} can be conceived. The given system is represented as a hierarchy of elementary interconnections. Starting from the gains (frequency-wise norms) of the bottom-level subsystems, the gains of their elementary interconnections, and interconnections thereof, are upper-bounded by Theorem~\ref{th: norm compositions} while ascending the hierarchy. During this process, internal stability of each occurring feedback loop is verified by the small-gain condition~\eqref{eq: small gain}. If every loop passes the check, the entire system is concluded to be internally stable.

\begin{figure}[t]
   \centering
   \includegraphics{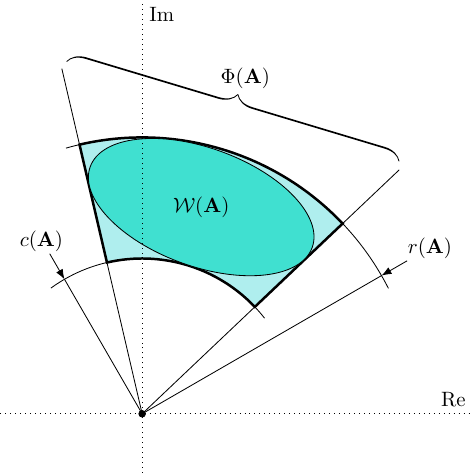}
   \caption{Ring-sectorial bounds of the numerical range $\calW(\bfA)$.}
   \label{fig: ring-sectorial bounds}
\end{figure}

\section{Main Results}
\label{se: main}

Our goal is to extend the compositional approach of Section~\ref{se: compositional small gain} from the small-gain test~\eqref{eq: small gain} to the entire small-gain-or-phase Theorem~\ref{th: small gain phase}. This calls for a phase sector analogue of Theorem~\ref{th: norm compositions}. Since the phase sector $\Phi(\cdot)$ is not composable on its own, a composable set $\calS(\cdot) \ni \Phi(\cdot)$ is required. We propose the following set of 6 functions:
\begin{equation}
   \calS \colon \bfA \mapsto \calS(\bfA) =
   \begin{Bmatrix}
      r(\bfA) & r(\bfA\inv) \\
      c(\bfA) & c(\bfA\inv) \\
      \Phi(\bfA) & \Phi(\bfA\inv)
   \end{Bmatrix}
\end{equation}
defined for an invertible $\bfA$.

To bound $\calS(\bfA)$ means to bound $r(\cdot)$, $c(\cdot)$, and $\Phi(\cdot)$ as in~\eqref{eq: conservative bounds}, i.e., to obtain (conservative) ring-sectorial bounds on both $\calW(\bfA)$ and $\calW(\bfA\inv)$.

Composability of the $\calS(\cdot)$-bounds is established in the current section. For completeness, however, let us first formulate the proposed stability analysis algorithm.

\subsection{Compositional Small-Gain-or-Phase Analysis}
\label{se: compositional small gain phase}

Given a decomposable system in Problem~\ref{problem}, decompose it into a hierarchy of elementary interconnections. Suppose that $\norm{\cdot}$-bounds and, optionally\footnote{For example, if $\bfA$ is not invertible, then $\calS(\bfA)$ is undefined.}, $\calS(\cdot)$-bounds are available for every bottom-level subsystem at each frequency $\omega \in [0, \infty]$.

Starting from the bottom level, perform two steps at each consecutive level of the interconnection hierarchy:
\begin{enumerate}
   \item \emph{Check stability.} Apply Theorem~\ref{th: small gain phase} to verify that every elementary feedback loop at the current level is internally stable. The small-phase part of the theorem is only relevant if $\calS(\cdot)$-bounds are available for both subsystems.
   \item \emph{Compose the bounds.} Apply Theorem~\ref{th: norm compositions} and the forthcoming results of this section to attain $\norm{\cdot}$-bounds and, if possible, $\calS(\cdot)$-bounds for each composite system formed at the current level, so that Theorem~\ref{th: small gain phase} can be used again at the next level.
\end{enumerate}
If every stability check succeeds, internal stability of the entire system is concluded.

\subsection{Crawford Defect}
\label{se: Crawford defect}

In the rest of this Section~\ref{se: main} we present formulas for the compositions of $\calS(\cdot)$-bounds. Therein, ring-sectorial bounds on $\calW(\bfA)$ and $\calW(\bfA\inv)$ included in $\calS(\bfA)$ are coupled through the following quantity.

\begin{definition}[Crawford defect]
   \label{de: crawford defect}
   Let $\bfA \in \bbC^{n\times n}$ be sectorial. Then
   \begin{equation}
      \delta(\bfA) = \sqrt{\frac{1}{c(\bfA) c(\bfA\inv)} - 1} \geq 0
   \end{equation}
   is called the \emph{Crawford defect} of $\bfA$.
\end{definition}

Note that $\calS(\cdot)$-bounds imply an upper bound on $\delta(\cdot)$.

\begin{remark}[Crawford defect as spread]
   \label{re: defect as spread}
   By Lemma~\ref{le: inverse bounds}
   \begin{equation}
      \label{eq: defect as spread}
      \sqrt{\frac{\norm{\bfA}}{c(\bfA)} - 1}
      \leq \delta(\bfA) \leq
      \sqrt{\p*{\frac{\norm{\bfA}}{c(\bfA)}}^2 - 1}
   \end{equation}
   which shows that $\delta(\bfA)$ controls the ratio $\norm{\bfA}/c(\bfA)$ and can be interpreted as a measure of the \textit{relative radial spread} of $\calW(\bfA)$. In particular, $\delta(\bfA) = 0$ if and only if $\bfA$ is scalar.
\end{remark}

\begin{example}[Near-identity matrix]
   \label{ex: near identity}
   Let $\bfA = \rmI + \bfDelta$ where $\norm{\bfDelta} \leq \epsilon < 1$. Then $\bfA$ admits the $\calS(\cdot)$-bounds
   \begin{subequations}
      \begin{align}
         r(\bfA) &\leq 1 + \epsilon, \\
         \label{eq: c near identity}
         c(\bfA) &\geq 1 - \epsilon, \\
         \label{eq: Phi near identity}
         \Phi(\bfA) &\subseteq \arc[-\arcsin\epsilon, \arcsin\epsilon], \\
         \label{eq: rinv near identity}
         r(\bfA\inv) &\leq \frac{1}{1 - \epsilon}, \\
         \label{eq: cinv near identity}
         c(\bfA\inv) &\geq \frac{1}{1 + \epsilon}, \\
         \label{eq: Phiinv near identity}
         \Phi(\bfA\inv) &\subseteq \arc[-\arcsin\epsilon, \arcsin\epsilon], \\
         \intertext{and consequently}
         \delta(\bfA) &\leq \sqrt{\frac{2\epsilon}{1-\epsilon}}.
      \end{align}
   \end{subequations}
   Here~\eqref{eq: rinv near identity} and~\eqref{eq: Phiinv near identity} follow from~\eqref{eq: c near identity} and~\eqref{eq: Phi near identity} via Lemma~\ref{le: inverse bounds}. For the proof of~\eqref{eq: cinv near identity} see Appendix~\ref{ap: proof cinv near identity}.
\end{example}

\subsection{Series Connection}
\label{se: series}

The composition of $\calS(\cdot)$-bounds under the series interconnection (Fig.~\ref{fig: connection series}) is given in the following theorem.

\begin{theorem}
   \label{th: series}
   Let $\bfA_1, \bfA_2, \dots, \bfA_N \in \bbC^{n \times n}$ be sectorial matrices with known bounds on each $\calS(\bfA_k)$~-- in particular,
   \begin{subequations}
      \label{eq: given bounds}
      \begin{align}
         r(\bfA_k) &\leq \hat r_k, \\
         c(\bfA_k) &\geq \hat c_k, \\
         \Phi(\bfA_k) &\subseteq \hat\Phi_k, \\
         \delta(\bfA_k) &\leq \hat\delta_k
      \end{align}
   \end{subequations}
   where $\hat\delta_k$ follows from the bounds on $c(\bfA_k)$ and $c(\bfA_k\inv)$. Define
   \begin{subequations}
      \begin{align}
         \bfB &= \bfA_N \bfA_{N-1} \dots \bfA_1, \\
         \bfD_k &= \begin{bmatrix}
            1 & \hat\delta_k \\ \hat\delta_k & 1 + \hat\delta^2_k
         \end{bmatrix}, \\
         \label{eq: series defect}
         d &= \begin{bmatrix}
            1 & 0
         \end{bmatrix} \bfD_N \bfD_{N-1} \dots \bfD_1 \begin{bmatrix}
            1 \\ 0
         \end{bmatrix} - 1.
      \end{align}
   \end{subequations}
   If $d < 1$ then $\bfB$ is sectorial and
   \begin{subequations}
      \label{eq: bounds series}
      \begin{align}
         r(\bfB) &\leq (1+d) \prod_{k=1}^N \hat r_k, \\
         c(\bfB) &\geq (1-d) \prod_{k=1}^N \hat c_k, \\
         \Phi(\bfB) &\subseteq \sum_{k=1}^N \hat\Phi_k + \arc[-\arcsin d, \arcsin d].
      \end{align}
   \end{subequations}
   Ring-sectorial bounds on $\calW(\bfB\inv)$ are given by this theorem applied to the matrices $\bfA_1\inv, \bfA_2\inv, \dots, \bfA_N\inv$, thus completing the bounds on $\calS(\bfB)$.
\end{theorem}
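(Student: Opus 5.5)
The plan is to reduce everything to a one-step ``near-eigenvector'' estimate and then propagate it along the chain $\bfA_1, \bfA_2, \dots, \bfA_N$. The basic lemma I would prove first is the following. Let $\bfA$ be sectorial, let $\bfx$ be a unit vector, and split
\begin{equation*}
   \bfA\bfx = \alpha\bfx + \bfw, \qquad \alpha = \bfx^*\bfA\bfx \in \calW(\bfA), \qquad \bfw \perp \bfx.
\end{equation*}
Then $\norm{\bfw} \leq \delta(\bfA)\abs{\alpha}$.

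The proof is short. Put $\bfy = \bfA\bfx$. Then $\bfy^*\bfA\inv\bfy = \bfx^*\bfA^*\bfx = \alpha^*$, so the definition of $c(\bfA\inv)$ gives $\abs{\alpha} \geq c(\bfA\inv)\norm{\bfA\bfx}^2$. Hence
\begin{equation*}
   \frac{\norm{\bfw}^2}{\abs{\alpha}^2}
   = \frac{\norm{\bfA\bfx}^2 - \abs{\alpha}^2}{\abs{\alpha}^2}
   \leq \frac{1}{\abs{\alpha}\, c(\bfA\inv)} - 1
   \leq \frac{1}{c(\bfA)\, c(\bfA\inv)} - 1
   = \delta(\bfA)^2,
\end{equation*}
where the last inequality uses $\abs{\alpha} \geq c(\bfA)$. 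This is the precise sense in which a small Crawford defect makes $\bfA$ ``close to scalar'' along every direction. The same computation gives a companion bound for a vector $\bfv \perp \bfx$: $\norm{\bfA\bfv}^2 \leq \abs{\bfv^*\bfA\bfv}\,\norm{\bfv}^2 / c(\bfA\inv)$.

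Next, fix a unit $\bfx$ and track the iterates $\bfx_0 = \bfx$, $\bfx_k = \bfA_k\bfx_{k-1}$, decomposing each as $\bfx_k = a_k\bfx + \bfv_k$ with $\bfv_k \perp \bfx$. I would normalize by the scalars $\alpha_k = \bfx^*\bfA_k\bfx$ and set $\pi_k = \alpha_k\alpha_{k-1}\cdots\alpha_1$. The quantities to control are the relative deviation $\abs{a_k - \pi_k}/\abs{\pi_k}$ and the relative transverse part $\norm{\bfv_k}/\abs{\pi_k}$.

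Substituting $\bfx_k = a_{k-1}\bfA_k\bfx + \bfA_k\bfv_{k-1}$ and applying the lemma to the first term produces a linear recursion of the form
\begin{equation*}
   (\text{along}_k, \text{transverse}_k)^{\mathsf T} \leq \bfD_k\, (\text{along}_{k-1}, \text{transverse}_{k-1})^{\mathsf T}
\end{equation*}
(entrywise, with nonnegative entries). The column $(1, \hat\delta_k)^{\mathsf T}$ comes from $\bfA_k\bfx = \alpha_k\bfx + \bfw_k$. The column $(\hat\delta_k, 1 + \hat\delta_k^2)^{\mathsf T}$ is the gain of $\bfA_k$ on the transverse component, relative to $\abs{\alpha_k}$: its $\bfx$-component is bounded via the lemma applied to $\bfA_k^*$ (note $\delta(\bfA^*) = \delta(\bfA)$), and its norm via the companion bound. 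Iterating from the initial vector $(1, 0)^{\mathsf T}$ and reading off the first entry gives
\begin{equation*}
   \bfx^*\bfB\bfx = a_N = \pi_N(1 + \varepsilon), \qquad \abs{\varepsilon} \leq d,
\end{equation*}
which is exactly the quantity~\eqref{eq: series defect}. Each $\pi_N$ has modulus in $[\prod_k\hat c_k, \prod_k\hat r_k]$ and argument in $\sum_k\hat\Phi_k$, since $\alpha_k \in \calW(\bfA_k)$.

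The conclusions then follow directly. When $d < 1$, the factor $1 + \varepsilon$ lies in the disk of radius $d$ about $1$. That disk has modulus between $1 - d$ and $1 + d$ and argument within $\arcsin d$ of $0$, which gives all three bounds in~\eqref{eq: bounds series}. Since $c(\bfB) \geq (1 - d)\prod_k\hat c_k > 0$, $\bfB$ is sectorial. The statement for $\calW(\bfB\inv)$ is immediate, because $\bfB\inv = \bfA_1\inv\cdots\bfA_N\inv$ is again a series product of sectorial matrices and $\delta(\bfA_k\inv) = \delta(\bfA_k)$.

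The main obstacle is the transverse column of $\bfD_k$. The bound must be expressed relative to $\abs{\alpha_k}$, the value of the form at the fixed $\bfx$, and not at $\bfv_{k-1}$. I would first try bounding $\abs{\bfx^*\bfA_k\bfv}$ and $\norm{\bfA_k\bfv}$ via the lemma for $\bfA_k^*$ together with the inequality $\norm{\bfA_k}^2 \leq r(\bfA_k)/c(\bfA_k\inv)$. If the ratio $\norm{\bfA_k}/\abs{\alpha_k}$ does not close up to $1 + \hat\delta_k^2$, I would instead normalize by a common scalar per factor, a Hermitian-part argument after rotating $\calW(\bfA_k)$, and accept the resulting modification of the recursion.
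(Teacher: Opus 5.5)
Your proposal is correct and follows the paper's proof essentially step for step. It uses the same orthogonal-component lemma (Lemma~\ref{le: orthogonal bound}), the same splitting of $\bfA_k\cdots\bfA_1\bfx$ along $\bfx$ and $\bfx^\perp$ normalized by $\alpha_1\cdots\alpha_k$, the lemma applied to $\bfA_k^*$ for the $(1,2)$ entry, and the $\bfD_k$ recursion on $\bigl(1+\abs{a_k/\pi_k-1},\ \norm{\bfv_k}/\abs{\pi_k}\bigr)$ started from $(1,0)^{\mathsf T}$.

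The step you flagged as the main obstacle closes with your first attempt, so no fallback is needed. Indeed, $\norm{\bfA_k}^2 \le r(\bfA_k)/c(\bfA_k\inv) \le \norm{\bfA_k}/c(\bfA_k\inv)$ gives $\norm{\bfA_k}/\abs{\alpha_k} \le \norm{\bfA_k}/c(\bfA_k) \le 1/\bigl(c(\bfA_k)c(\bfA_k\inv)\bigr) = 1+\delta(\bfA_k)^2$. This is exactly the transverse gain the paper uses, namely the left inequality of~\eqref{eq: defect as spread}.
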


\begin{IEEEproof}
   See Appendix~\ref{ap: proof series}.
\end{IEEEproof}

\begin{remark}[Two-matrix case]
   \label{re: two matrix series}
   If $N = 2$, in~\eqref{eq: series defect} we have $d = \hat\delta_1 \hat\delta_2$, and Theorem~\ref{th: series} implies
   \begin{equation}
      \delta(\bfA_1 \bfA_2) \leq \dfrac{\hat\delta_1 + \hat\delta_2}
      {1 - \hat\delta_1 \hat\delta_2}
   \end{equation}
   which can be reformulated as
   \begin{equation}
      \label{eq: arctan delta summed}
      \arctan\delta(\bfA_1 \bfA_2) \leq \arctan\hat\delta_1 + \arctan\hat\delta_2
   \end{equation}
   and hints at a certain meaning behind the angle $\arctan\delta(\cdot)$. See also Section~\ref{se: normalized range} on its connection to the \emph{segmental phase}.
\end{remark}

\subsection{Parallel Connection}
\label{se: parallel}

The composition of $\calS(\cdot)$-bounds under the parallel interconnection (Fig.~\ref{fig: connection parallel}) is given in the following theorem.

\begin{theorem}
   \label{th: parallel}
   Let $\bfA_1, \bfA_2, \dots, \bfA_N \in \bbC^{n \times n}$ be sectorial\footnote{See Remark~\ref{re: non sectorial inputs to parallel} if some $\bfA_k$ are not sectorial.} matrices with known bounds~\eqref{eq: given bounds}. Define
   \begin{subequations}
      \begin{align}
         \bfB &= \bfA_1 + \bfA_2 + \dots + \bfA_N, \\
         \hat\calR &= \set[\bigg]{
            \sum_{k=1}^N z_k \colon
            \abs{z_k} \in [\hat c_k, \hat r_k],
            \arg z_k \in \hat\Phi_k
         }, \\
         \hat r &= \max\set[\big]{
            \abs{z} \colon z \in \hat\calR
         }, \\
         \hat c &= \min\set[\big]{
            \abs{z} \colon z \in \hat\calR
         }, \\
         \hat\Phi &= \set{
            \arg z \colon z \in \hat\calR
         }.
      \end{align}
   \end{subequations}
   Then
   \begin{subequations}
      \label{eq: parallel bounds}
      \begin{align}
         r(\bfB) &\leq \hat r, \\
         c(\bfB) &\geq \hat c, \\
         \Phi(\bfB) &\subseteq \hat\Phi.
         \intertext{Furthermore, if $0 \not\in \hat\calR$ then $\bfB$ is sectorial and}
         r(\bfB\inv) &\leq \frac{1}{\hat c}, \\
         \label{eq: parallel bounds c inv}
         c(\bfB\inv) &\geq
            \frac{1}{
               \hat r + \dfrac{1}{\hat c} \p[\Big]{
                  \sum\limits_{k=1}^N \hat\delta_k \hat r_k
               }^2
            }, \\
         \Phi(\bfB\inv) &\subseteq -\hat\Phi.
      \end{align}
   \end{subequations}
\end{theorem}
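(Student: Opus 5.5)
The first three bounds follow from the elementary inclusion $\calW(\bfB) \subseteq \hat\calR$. The inverse bounds will all be reduced to statements about $\calW(\bfB)$ and the vectors $\bfB\bfx$; the only nontrivial one is the Crawford-number bound on $\bfB\inv$.

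\emph{Step 1: $\calW(\bfB) \subseteq \hat\calR$.} For a unit vector $\bfx$ we have $\bfx^*\bfB\bfx = \sum_k z_k$ with $z_k = \bfx^*\bfA_k\bfx \in \calW(\bfA_k)$. By~\eqref{eq: given bounds}, $\abs{z_k} \in [\hat c_k, \hat r_k]$ and $\arg z_k \in \hat\Phi_k$, so $\bfx^*\bfB\bfx \in \hat\calR$. The bounds on $r(\bfB)$, $c(\bfB)$ and $\Phi(\bfB)$ follow immediately. If $0 \notin \hat\calR$, then $0 \notin \calW(\bfB)$, so $\bfB$ is sectorial and $c(\bfB) \ge \hat c > 0$.

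\emph{Step 2: the easy inverse bounds.} The relation $\Phi(\bfB\inv) = -\Phi(\bfB) \subseteq -\hat\Phi$ is Lemma~\ref{le: inverse bounds}. For the radius, Cauchy--Schwarz gives $\norm{\bfB\bfx} \ge \abs{\bfx^*\bfB\bfx} \ge c(\bfB)$ for unit $\bfx$. Hence
\begin{equation*}
r(\bfB\inv) \le \norm{\bfB\inv} \le 1/c(\bfB) \le 1/\hat c.
\end{equation*}

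\emph{Step 3: reformulating $c(\bfB\inv)$.} Substitute $\bfy = \bfB\bfx$. Then $\bfy^*\bfB\inv\bfy = \bfx^*\bfB^*\bfx = (\bfx^*\bfB\bfx)^*$, so
\begin{equation*}
c(\bfB\inv) = \min_{\norm{\bfx}=1} \frac{\abs{\bfx^*\bfB\bfx}}{\norm{\bfB\bfx}^2}.
\end{equation*}
The same identity applied to each $\bfA_k$ gives $\norm{\bfA_k\bfx}^2 \le \abs{a_k}/c(\bfA_k\inv)$, where $a_k = \bfx^*\bfA_k\bfx$.

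\emph{Step 4: the key estimate (main obstacle).} We need to bound $\norm{\bfB\bfx}^2$ in terms of $\abs{\bfx^*\bfB\bfx}$. Split orthogonally: $\bfA_k\bfx = a_k\bfx + \bfw_k$ with $\bfw_k \perp \bfx$. Then
\begin{equation*}
\norm{\bfw_k}^2 = \norm{\bfA_k\bfx}^2 - \abs{a_k}^2 \le \abs{a_k}\bigl(c(\bfA_k\inv)^{-1} - \abs{a_k}\bigr).
\end{equation*}
Since $\abs{a_k} \ge c(\bfA_k)$, the bracket is at most $c(\bfA_k\inv)^{-1} - c(\bfA_k) = \delta(\bfA_k)^2 c(\bfA_k)$. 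Using $c(\bfA_k) \le \abs{a_k} \le \hat r_k$, this yields
\begin{equation*}
\norm{\bfw_k} \le \hat\delta_k \hat r_k.
\end{equation*}

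\emph{Step 5: conclusion.} Write $b = \bfx^*\bfB\bfx$ and $S = \sum_k \hat\delta_k\hat r_k$. Then $\bfB\bfx = b\bfx + \sum_k \bfw_k$ with $\sum_k \bfw_k \perp \bfx$, so
\begin{equation*}
\norm{\bfB\bfx}^2 \le \abs{b}^2 + S^2.
\end{equation*}
Since $\abs{b} \in [\hat c, \hat r]$ by Step 1,
\begin{equation*}
\frac{\abs{b}}{\norm{\bfB\bfx}^2} \ge \frac{1}{\abs{b} + S^2/\abs{b}} \ge \frac{1}{\hat r + S^2/\hat c}.
\end{equation*}
This is~\eqref{eq: parallel bounds c inv}.
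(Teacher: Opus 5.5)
Your proof is correct and follows the paper's argument: the substitution $\bfy = \bfB\bfx$, the orthogonal split of each $\bfA_k\bfx$ into its component along $\bfx$ and a remainder, a Crawford-defect bound on those remainders, and the final estimate using $\abs{\bfx^*\bfB\bfx} \in [\hat c, \hat r]$. The only difference is that your Step~4 re-derives Lemma~\ref{le: orthogonal bound} inline, passing through the marginally sharper intermediate bound $\norm{\bfw_k}^2 \le \delta(\bfA_k)^2 c(\bfA_k)\abs{a_k}$, where the paper simply cites the lemma.
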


\begin{IEEEproof}
   Everything except~\eqref{eq: parallel bounds c inv} follows trivially from the subadditivity of the numerical range and Lemma~\ref{le: inverse bounds}. For the proof of~\eqref{eq: parallel bounds c inv}, see Appendix~\ref{ap: proof parallel}.
\end{IEEEproof}

\begin{remark}[Computation in Theorem~\ref{th: parallel}]
   The set $\hat\calR$ is the Minkowski sum of the ring-sectorial bounds on $\calW(\bfA_k)$ which are non-convex, so the geometry here may be complicated. To simplify it, one can inscribe the ring-sectorial bounds into convex polygons and Minkowski-sum those. An even simpler alternative is the approximation
   \begin{subequations}
      \begin{align}
         \hat r &\leq \sum_{k=1}^N \hat r_k, \\
         \hat c &\geq \sum_{k=1}^N \hat c_k \cos\p[\bigg]{
            \tfrac12 \diam\bigvee_{k=1}^N \hat\Phi_k
         }, \\
         \label{eq: parallel sector join}
         \hat\Phi &\subseteq \bigvee_{k=1}^N \hat\Phi_k.
      \end{align}
   \end{subequations}
   Inclusion~\eqref{eq: parallel sector join} conveys the same idea as~\cite[Theorem~7.1]{wangPhasesComplexMatrix2020}.
\end{remark}

\begin{remark}[Non-sectorial inputs to Theorem~\ref{th: parallel}]
   \label{re: non sectorial inputs to parallel}
   The theorem holds if at least some bounds on $\calW(\bfA_k)$ are available, not necessarily the ring-sectorial ones. For example, the norm-disk bound may be sufficient:
   \begin{equation}
      \norm{\bfA_k} \leq \hat\gamma_k \implies
      \calW(\bfA_k) \subseteq \set[\big]{z \colon \abs{z} \leq \hat\gamma_k}.
   \end{equation}
   The set $\hat\calR$ is still obtained as the Minkowski sum of the bounds on $\calW(\bfA_k)$, and bounds~\eqref{eq: parallel bounds} hold except for~\eqref{eq: parallel bounds c inv} which must be replaced with Lemma~\ref{le: inverse bounds}. This differentiates Theorem~\ref{th: parallel} from Theorem~\ref{th: series} in that the resulting matrix $\bfB$ here may turn out to be sectorial even though some of $\bfA_k$ are not. The same can be said about the feedback connection (Section~\ref{se: feedback}).
\end{remark}

\begin{example}[Small additive perturbation]
   \label{ex: bounds under additive perturbation}
   Consider sectorial $\bfA$ and small $\bfDelta$ such that $\norm{\bfDelta} \leq \epsilon < c(\bfA)$. Then
   \begin{subequations}
      \label{eq: bounds under additive perturbation}
      \begin{align}
         r(\bfA + \bfDelta) &\leq r(\bfA) + \epsilon, \\
         c(\bfA + \bfDelta) &\geq c(\bfA) - \epsilon, \\
         \Phi(\bfA + \bfDelta) &\subseteq \Phi(\bfA) + \arc[-\beta, \beta]
      \end{align}
   \end{subequations}
   where $\beta = \arcsin\dfrac{\epsilon}{c(\bfA)}$.
\end{example}

\subsection{Feedback Connection}
\label{se: feedback}

The transfer matrix of the feedback loop in Fig.~\ref{fig: connection feedback} is
\begin{subequations}
   \label{eq: feedback}
   \begin{align}
      \label{eq: feedback long}
      \bfB &= (\rmI + \bfA_1 \bfA_2)\inv \bfA_1 \\
      \label{eq: feedback short}
      &= (\bfA_1\inv + \bfA_2)\inv
   \end{align}
\end{subequations}
where~\eqref{eq: feedback short} is valid whenever $\bfA_1$ is invertible. Representation~\eqref{eq: feedback long} suggests a three-step estimation of $\calS(\bfB)$:
\begin{itemize}
   \item apply Theorem~\ref{th: series} to $\bfA_1$ and $\bfA_2$;
   \item apply Theorem~\ref{th: parallel} to $\rmI$ and $\bfA_1 \bfA_2$;
   \item apply Theorem~\ref{th: series} to $(\rmI + \bfA_1 \bfA_2)\inv$ and $\bfA_1$.
\end{itemize}
Representation~\eqref{eq: feedback short}, on the other hand, suggests the direct approach:
\begin{itemize}
   \item apply Theorem~\ref{th: parallel} to $\bfA_1\inv$ and $\bfA_2$.
\end{itemize}
Generally, neither of the two approaches is superior: they may yield complementary results.

If the feedback loop does not satisfy the small-gain condition, Theorem~\ref{th: norm compositions} does not apply, so we need alternative bounds on $\norm{\bfB}$. From~\eqref{eq: feedback long} and~\eqref{eq: feedback short} using Lemma~\ref{le: inverse bounds} and subadditivity of the numerical range we obtain
\begin{subequations}
   \label{eq: feedback norm bounds}
   \begin{align}
      \label{eq: feedback norm bound long}
      \norm{\bfB} &\leq \frac{\norm{\bfA_1}}{\dist\p[\big]{-1, \calW(\bfA_1 \bfA_2)}}, \\
      \label{eq: feedback norm bound short}
      \norm{\bfB} &\leq \frac{1}{\dist\p[\big]{-\calW(\bfA_1\inv), \calW(\bfA_2)}}.
   \end{align}
\end{subequations}

\begin{remark}
   Bound~\eqref{eq: feedback norm bound short} is analogous to the nonlinear (scaled relative graph) results~\cite[Lemma~5]{chaffeyHomotopyTheoremIncremental2026} and~\cite[Theorem~5]{krebbekxGraphicalAnalysisNonlinear2025}. With $\bfA_1 = \lambda\inv \rmI$ and $\bfA_2 = \bfA$, both variants~\eqref{eq: feedback norm bounds} yield the standard \emph{resolvent bound}~\cite[Theorem~V-3.2]{katoPerturbationTheoryLinear1984}
   \begin{equation}
      \label{eq: resolvent}
      \norm[\big]{(\lambda\rmI + \bfA)\inv} \leq \frac{1}{\dist\p[\big]{{-\lambda}, \calW(\bfA)}}.
   \end{equation}
\end{remark}

The denominators of~\eqref{eq: feedback norm bounds} can be lower-bounded if the bounds on $\calS(\bfA_1)$ and $\calS(\bfA_2)$ are available, but the analytical expressions may not be elegant. Two simplifications are given in the following statement which is complementary to Theorem~\ref{th: norm compositions}, particularly when the small-gain condition of the latter does not hold.

\begin{theorem}[Extension to Theorem~\ref{th: norm compositions}]
   \label{th: feedback norm}
   Let $\bfA_1, \bfA_2 \in \bbC^{n\times n}$ be sectorial matrices. Define $\bfB = (\rmI + \bfA_1 \bfA_2)\inv \bfA_1$ assuming that it exists, e.g., due to the small-gain or small-phase condition.

   If $\theta\mmin = \dist\p[\big]{[\pi] - \Phi(\bfA_1), \Phi(\bfA_2)} > 0$ then
   \begin{equation}
      \label{eq: feedback norm bound phase gap}
      \norm{\bfB} \leq \frac{1}{\max\set[\big]{c(\bfA_1\inv), c(\bfA_2)}
      \sin\min\set[\big]{\theta\mmin, \frac\pi2}}
   \end{equation}
   which may be called a \emph{phase-gap} bound ($\theta\mmin$ is the gap between the phase sectors of $-\bfA_1\inv$ and $\bfA_2$).

   Furthermore, if $c(\bfA_1) c(\bfA_2) > 1$ then
   \begin{equation}
      \label{eq: feedback norm bound high gain}
      \norm{\bfB} \leq \frac{1}{c(\bfA_2) - c(\bfA_1)\inv}
   \end{equation}
   which may be called a \emph{high-gain} bound (it holds if the product $\calW(\bfA_1)\calW(\bfA_2)$ lies further than 1 from zero).
\end{theorem}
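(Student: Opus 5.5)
Both bounds will come from the representation \eqref{eq: feedback short}, $\bfB = (\bfA_1\inv + \bfA_2)\inv$, which is valid because a sectorial $\bfA_1$ is invertible. For any invertible $\bfM$ we have $\norm{\bfM\inv} \le 1/c(\bfM)$, since $\norm{\bfM\bfx}\norm{\bfx} \ge \abs{\bfx^*\bfM\bfx} \ge c(\bfM)\norm{\bfx}^2$. This is also the content of \eqref{eq: feedback norm bound short}. So it suffices to lower-bound $c(\bfA_1\inv + \bfA_2)$, i.e. the distance from $0$ to $\calW(\bfA_1\inv + \bfA_2)$. By subadditivity, $\calW(\bfA_1\inv + \bfA_2) \subseteq \calW(\bfA_1\inv) + \calW(\bfA_2)$. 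Hence it is enough to show that $\abs{z_1 + z_2}$ is bounded below for all $z_1 \in \calW(\bfA_1\inv)$ and $z_2 \in \calW(\bfA_2)$.

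\emph{Phase-gap bound.} By Lemma~\ref{le: inverse bounds}, $\arg z_1 \in -\Phi(\bfA_1)$, so $\arg(-z_1) \in [\pi] - \Phi(\bfA_1)$. The hypothesis therefore says that the angle $\theta$ between $-z_1$ and $z_2$ satisfies $\theta \ge \theta\mmin > 0$. The elementary planar fact to use is that the distance from $z_2$ to the ray through $-z_1$ is at least $\abs{z_2}\sin\min\{\theta, \frac\pi2\}$. Indeed, for $\theta \le \pi/2$ it is the perpendicular distance $\abs{z_2}\sin\theta$, and for $\theta > \pi/2$ the nearest point of the ray is the origin, at distance $\abs{z_2}$. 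Since $-z_1$ lies on that ray, $\abs{z_1+z_2} = \abs{z_2 - (-z_1)} \ge c(\bfA_2)\sin\min\{\theta\mmin,\frac\pi2\}$. Exchanging roles, the distance from $-z_1$ to the ray through $z_2$ gives $\abs{z_1+z_2} \ge c(\bfA_1\inv)\sin\min\{\theta\mmin,\frac\pi2\}$. Taking the larger of the two yields \eqref{eq: feedback norm bound phase gap}.

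\emph{High-gain bound.} By the reverse triangle inequality, $\abs{z_1+z_2} \ge \abs{z_2} - \abs{z_1} \ge c(\bfA_2) - r(\bfA_1\inv)$. Lemma~\ref{le: inverse bounds} gives $r(\bfA_1\inv)$ only in terms of norms, so the key step is to prove the sharper estimate $r(\bfA_1\inv) \le 1/c(\bfA_1)$. Take a unit vector $\bfx$ and set $\bfy = \bfA_1\inv\bfx$. Then $\abs{\bfx^*\bfA_1\inv\bfx} = \abs{\bfy^*\bfA_1^*\bfy}$, which is at most $\norm{\bfy}^2$ times the modulus of a point of $\calW(\bfA_1)^*$, and that modulus is at most $r(\bfA_1)$. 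This direction is useless, so I would argue differently. From $\abs{\bfy^*\bfA_1\bfy} \ge c(\bfA_1)\norm{\bfy}^2$ and $\abs{\bfy^*\bfA_1\bfy} \le \norm{\bfy}\,\norm{\bfA_1\bfy} = \norm{\bfy}$ we get $\norm{\bfy} \le 1/c(\bfA_1)$. Then $\abs{\bfx^*\bfy} \le \norm{\bfy} \le 1/c(\bfA_1)$, which proves $r(\bfA_1\inv) \le \norm{\bfA_1\inv} \le c(\bfA_1)\inv$. Consequently $c(\bfA_1\inv + \bfA_2) \ge c(\bfA_2) - c(\bfA_1)\inv$, which is positive by the hypothesis $c(\bfA_1)c(\bfA_2) > 1$. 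This gives \eqref{eq: feedback norm bound high gain}.

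The main point needing care is the geometric ray-distance inequality in the phase-gap case. The phase sets live in $\bbT$, so I must handle the case where the angle between $-z_1$ and $z_2$ exceeds $\pi/2$; this is why the bound uses $\sin\min\{\theta\mmin,\frac\pi2\}$. The rest is routine.
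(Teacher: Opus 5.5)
Your proposal is correct and follows the paper's proof. Both bounds come from \eqref{eq: feedback norm bound short}, i.e.\ $\bfB = (\bfA_1\inv + \bfA_2)\inv$ together with subadditivity of the numerical range, combined with ring-sectorial inclusions: for the phase-gap bound, $-\calW(\bfA_1\inv)$ lies in $\{|z| \ge c(\bfA_1\inv),\ \arg z \in [\pi]-\Phi(\bfA_1)\}$ and $\calW(\bfA_2)$ lies in $\{|z| \ge c(\bfA_2),\ \arg z \in \Phi(\bfA_2)\}$; for the high-gain bound, $\calW(\bfA_1\inv)$ lies in $\{|z| \le c(\bfA_1)\inv\}$ and $\calW(\bfA_2)$ lies in $\{|z| \ge c(\bfA_2)\}$. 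The only difference is that you verify $r(\bfA_1\inv) \le \norm{\bfA_1\inv} \le c(\bfA_1)\inv$ and the ray-distance estimate explicitly, whereas the paper cites Lemma~\ref{le: inverse bounds} and leaves the planar geometry implicit.
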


\begin{IEEEproof}
   Both bounds follow from~\eqref{eq: feedback norm bound short}: bound~\eqref{eq: feedback norm bound phase gap}~-- via the inclusions
   \begin{subequations}
      \begin{align}
         \notag
         -\calW(\bfA_1\inv) &\subset \set[\big]{z \in \bbC \colon
         \abs{z} \geq c(\bfA_1\inv), \\
         \label{eq: bound of minus inverse range}
         &\phantom{\subset \big\lbrace z \in \bbC \colon} \:
         \arg z \in [\pi] - \Phi(\bfA_1)}, \\
         \notag
         \calW(\bfA_2) &\subset \set[\big]{z \in \bbC \colon
         \abs{z} \geq c(\bfA_2), \\
         &\phantom{\subset \big\lbrace z \in \bbC \colon} \:
         \arg z \in \Phi(\bfA_2)},
      \end{align}
   \end{subequations}
   and~\eqref{eq: feedback norm bound high gain}~-- via
   \begin{subequations}
      \begin{align}
         \label{eq: bound of inverse range}
         \calW(\bfA_1\inv) &\subset \set[\big]{z \in \bbC \colon
         \abs{z} \leq c(\bfA_1)\inv}, \\
         \calW(\bfA_2) &\subset \set[\big]{z \in \bbC \colon
         \abs{z} \geq c(\bfA_2)}
      \end{align}
   \end{subequations}
   where~\eqref{eq: bound of minus inverse range} and~\eqref{eq: bound of inverse range} are justified by Lemma~\ref{le: inverse bounds}.
\end{IEEEproof}

\section{Discussion}
\label{se: discussion}

\subsection{Interplay of the $\norm{\cdot}$-Bounds and $\calS(\cdot)$-Bounds}
\label{se: norm radius interplay}

In the algorithm of Section~\ref{se: compositional small gain phase}, bounds on $\norm{\cdot}$ and $\calS(\cdot)$ are both carried through the iterative composition procedure. The $r(\cdot)$-bound included in $\calS(\cdot)$ may sometimes be improved using the $\norm{\cdot}$-bound: $r(\cdot) \leq \norm{\cdot}$. Conversely, $\calS(\cdot)$-bounds may help improve the $\norm{\cdot}$-bound due to the recently discovered property of sectorial operators~\cite[Theorem~1.1]{druryNumericalRadiusInequality2024}
\begin{equation}
   \label{eq: norm bound via radius}
   \norm{\bfA} \leq r(\bfA) \sqrt{1 + \sin^2\frac{\diam\Phi(\bfA)}{2}}
\end{equation}
where $\tfrac12 \diam\Phi(\bfA)$ is the \emph{sectorial phase radius} of $\bfA$. Ring-sectorial bounds~\eqref{eq: conservative bounds} imply
\begin{equation}
   \label{eq: sectorial radius from ring-sectorial bounds}
   \tfrac12\diam\Phi(\bfA) \leq \min\set*{
      \tfrac12 \diam\hat\Phi, \:
      \arccos\frac{\hat c}{\hat r}
   }.
\end{equation}

Lemma~\ref{le: inverse bounds} contains other useful inequalities with which, e.g., the bound on $c(\bfB\inv)$ in Theorem~\ref{th: parallel} can be corrected to
\begin{equation}
   \label{eq: parallel bound c inv corrected}
   c(\bfB\inv) \geq \max\set*{
      \frac{\hat c}{\hat \gamma^2}, \:
      \frac{1}{\hat r + \dfrac{1}{\hat c} \p[\Big]{
         \sum\limits_{k=1}^N \hat\delta_k \hat r_k
      }^2}}
\end{equation}
where the norm bound $\hat\gamma \geq \norm{\bfB}$ can be taken from Theorem~\ref{th: norm compositions} or from~\eqref{eq: norm bound via radius}, whichever is smaller. For brevity, we did not include all such options in Theorems~\ref{th: series} and~\ref{th: parallel}.

\subsection{How Tight is Theorem~\ref{th: series}?}
\label{se: tightness}

This section is a semi-formal discussion of the \emph{tightness}\footnote{We say that a bound in Theorem~\ref{th: series} is \emph{tight} for a certain class of matrices $\bfA_k$ if there is an example from this class where, given exact bounds on $\calS(\bfA_k)$, the bound holds as an equality.} of the bounds in Theorem~\ref{th: series}. Let us list the takeaways first:
\begin{itemize}
   \item the bound on $c(\bfB)$ is tight for a nontrivial subclass of scalar multiples of unitary matrices;
   \item the bounds on $r(\bfB)$ and $\Phi(\bfB)$ are tight if and only if all matrices $\bfA_k$ except maybe one are scalar;
   \item the bound on $r(\bfB)$ can sometimes be tightened using the norm bounds as explained in Section~\ref{se: norm radius interplay}.
\end{itemize}
Now we shall justify these statements.

If all $\bfA_k$ except maybe one are scalar, all bounds are clearly tight. In the non-scalar case, we limit our attention to a pair of equal matrices: $\bfA_1 = \bfA_2 = \bfA$. Let us start with the following example.

\begin{example}
   \label{ex: tightness}
   Take
   \begin{equation}
      \bfA = \begin{bmatrix}
         \ee^{j\epsilon} & 0 \\
         0 & \ee^{-j\epsilon}
      \end{bmatrix}, \quad 0 < \epsilon < \frac{\pi}{4}
   \end{equation}
   and consider the bounds on $\calS(\bfA^2)$ derived by Theorem~\ref{th: series} from the bounds on $\calS(\bfA)$. We have
   \begin{equation}
      \calW(\bfA) = \calW(\bfA\inv) = \conv\set{\ee^{j\epsilon}, \ee^{-j\epsilon}}
   \end{equation}
   and thus
   \begin{subequations}
      \begin{align}
         r(\bfA) &= 1, \\
         c(\bfA) &= \cos\epsilon, \\
         \Phi(\bfA) &= \arc[-\epsilon, \epsilon], \\
         \delta(\bfA) &= \tan\epsilon.
      \end{align}
   \end{subequations}
   The lower bound on $c(\bfA^2)$ given by Theorem~\ref{th: series} is tight:
   \begin{equation}
      \cos 2\epsilon = \underbrace{c(\bfA^2)
      \geq \p[\big]{1 - \delta(\bfA)^2} c(\bfA)^2}_\text{Theorem~\ref{th: series}}
      = \cos 2\epsilon.
   \end{equation}
   The upper bound on $r(\bfA^2)$ is loose:
   \begin{equation}
      \label{eq: radius bound unitary example}
      1 = \underbrace{r(\bfA^2)
      \leq \p[\big]{1 + \delta(\bfA)^2} r(\bfA)^2}_\text{Theorem~\ref{th: series}}
      = \frac{1}{\cos^2\epsilon}.
   \end{equation}
   The outer bound on $\Phi(\bfA^2)$ is also loose:
   \begin{multline}
      \label{eq: phase bound unitary example}
      \arc[-2\epsilon, 2\epsilon] = \underbrace{\Phi(\bfA^2)
      \subseteq 2\Phi(\bfA) + \arc[-\beta, \beta]}_\text{Theorem~\ref{th: series}} \\
      = \arc[-2\epsilon - \beta, 2\epsilon + \beta]
   \end{multline}
   where $\beta = \arcsin\p[\big]{\delta(\bfA)^2} = \arcsin \tan^2\epsilon$.
\end{example}

How representative is Example~\ref{ex: tightness} of the general situation?

The proof of Theorem~\ref{th: series} suggests that the bound on $c(\bfA^2)$ is tight only if (but not \emph{if and only if}) the inequality in Lemma~\ref{le: orthogonal bound} holds as equality which requires
\begin{equation}
   \label{eq: simultaneous minimum}
   \frac{\abs{\bm\xi^* \bfA \bm\xi}}{\norm{\bm\xi}^2} = c(\bfA), \quad
   \frac{\abs{\bm\xi^* \bfA \bm\xi}}{\norm{\bfA\bm\xi}^2} = c(\bfA\inv)
\end{equation}
with some $\bm\xi \in \bbC^n$. Conditions~\eqref{eq: simultaneous minimum} hold if $\bfA$ is a scalar multiple of unitary, such as in Example~\ref{ex: tightness}.

For the bound on $r(\bfA^2)$ to be tight it is necessary that, in addition to~\eqref{eq: simultaneous minimum}, $\dfrac{\abs{\bm\xi^* \bfA \bm\xi}}{\norm{\bm\xi}^2} = r(\bfA)$ which is incompatible with~\eqref{eq: simultaneous minimum} unless $\bfA$ is scalar.

Finally, for the bound on $\Phi(\bfA^2)$ to be tight it is required that, in addition to~\eqref{eq: simultaneous minimum}, the point $\zeta = \dfrac{\bm\xi^* \bfA \bm\xi}{\norm{\bm\xi}^2} \in \calW(\bfA)$ be such that $\arg\zeta$ lies on the boundary of $\Phi(\bfA)$. Together with~\eqref{eq: simultaneous minimum} it requires that $\zeta$ lie in a corner of the ring-sectorial bounds of $\calW(\bfA)$ and thus be a corner point of the boundary of $\calW(\bfA)$. By~\cite[Theorem~1]{donoghueNumericalRangeBounded1957}, such $\zeta$ is an eigenvalue of $\bfA$ with $\bm\xi$ as its eigenvector. Then from~\eqref{eq: simultaneous minimum}
\begin{equation}
   c(\bfA) = \frac{\abs{\bm\xi^* \bfA \bm\xi}}{\norm{\bm\xi}^2}
   = \abs{\zeta}, \quad
   c(\bfA\inv) = \frac{\abs{\bm\xi^* \bfA \bm\xi}}{\norm{\bfA\bm\xi}^2}
   = \frac{1}{\abs{\zeta}}
\end{equation}
and $\delta(\bfA) = 0$. Therefore, $\bfA$ must be scalar.

Recalling the ideas of Section~\ref{se: norm radius interplay}, one may wonder if $\norm{\bfA}$ could yield a better bound on $r(\bfA^2)$. Indeed, in Example~\ref{ex: tightness} the bound $r(\bfA^2) \leq \norm{\bfA}^2 = 1$ is tight. If the goal is to bound $r(\bfA^2)$ using only the bounds on $\calS(\bfA)$, we can take the route
\begin{multline}
   \label{eq: example radius via norm}
   r(\bfA^2) \leq \norm{\bfA}^2 \overset{\eqref{eq: norm bound via radius}}{\leq}
   r(\bfA)^2 \p*{1 + \sin^2 \frac{\diam\Phi(\bfA)}{2}} \\
   = 1 + \sin^2 \epsilon
\end{multline}
which is loose but still better than~\eqref{eq: radius bound unitary example}. An example where bounding $r(\bfA^2)$ via~\eqref{eq: example radius via norm} is tight is a positive-definite Hermitian $\bfA$: then $\Phi(\bfA) = [0]$ and thus $r(\bfA^2) \leq r(\bfA)^2$ which can be a dramatic improvement over Theorem~\ref{th: series} if $\delta(\bfA)$ is large (i.e., if $\bfA$ has widely spread eigenvalues). In general, however, norm-based bounds are not always better than Theorem~\ref{th: series}: consider $\bfA_1$ with $r(\bfA_1) < \norm{\bfA_1}$ and a scalar $\bfA_2$~-- then the bound on $r(\bfA_2\bfA_1)$ by Theorem~\ref{th: series} is tight whereas the norm-based bound is not.

\subsection{Normalized Numerical Range and Segmental Phase}
\label{se: normalized range}

Through the arguments of Section~\ref{se: tightness} one may come to the conclusion that a major source of conservatism in Theorem~\ref{th: series} is the bound established by Lemma~\ref{le: orthogonal bound} in Appendix~\ref{ap: lemma}. Let us discuss a possible way to reduce the conservatism.

Define the \emph{normalized numerical range}~\cite{auzingerSectorialOperatorsNormalized2003}
\begin{equation}
   \calW_{\rm n}(\bfA) = \min\set*{
      \frac{\abs{\bfx^* \bfA \bfx}}{\norm{\bfx}\norm{\bfA\bfx}} \colon
      \bfx \in \bbC^n \setminus \set{0}, \bfA\bfx \neq 0
   },
\end{equation}
\emph{normalized Crawford number}
\begin{equation}
   \label{eq: normalized Crawford}
   c_{\rm n}(\bfA) = \min\set[\big]{
      \abs{z} \colon z \in \calW_{\rm n}(\bfA)
   }
\end{equation}
and, assuming $c_{\rm n}(\bfA) > 0$, \emph{normalized Crawford defect}
\begin{equation}
   \delta_{\rm n}(\bfA)
   = \sqrt{\frac{1}{c_{\rm n}(\bfA) c_{\rm n}(\bfA\inv)} - 1}
   = \sqrt{\frac{1}{c_{\rm n}(\bfA)^2} - 1}.
\end{equation}
Assumption $c_{\rm n}(\bfA) > 0$ holds, in particular, if $\bfA$ is sectorial.

\begin{proposition}
   \label{pr: normalized defect}
   The bounds in Theorems~\ref{th: series} and~\ref{th: parallel} remain valid with $\delta$ replaced by $\delta_{\rm n}$. Furthermore, $\delta_{\rm n}(\cdot) \leq \delta(\cdot)$, so the replacement can only reduce the conservatism. The case $\delta_{\rm n}(\bfA) = \delta(\bfA)$ occurs when $\bfA$ is a scalar multiple of unitary.
\end{proposition}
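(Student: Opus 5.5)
Three things need to be shown: (i) $\delta$ enters Theorems~\ref{th: series} and~\ref{th: parallel} only through one inequality, and that inequality still holds with $\delta_{\rm n}$; (ii) $\delta_{\rm n}\le\delta$; (iii) equality when $\bfA$ is a scalar multiple of a unitary matrix.

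I expect the proofs in Appendices~\ref{ap: proof series} and~\ref{ap: proof parallel} to use the defect only via Lemma~\ref{le: orthogonal bound}, which is a statement about a single vector $\bm\xi$. It presumably bounds the component of $\bfA\bm\xi$ orthogonal to $\bm\xi$ relative to the component along $\bm\xi$. Write $\bfA\bm\xi = \lambda\bm\xi + \bfw$ with $\bfw\perp\bm\xi$ and $\lambda = \bm\xi^*\bfA\bm\xi/\norm{\bm\xi}^2$. Then
\begin{equation*}
   \frac{\norm{\bfw}^2}{\abs{\lambda}^2\norm{\bm\xi}^2}
   = \frac{\norm{\bfA\bm\xi}^2\norm{\bm\xi}^2}{\abs{\bm\xi^*\bfA\bm\xi}^2} - 1 .
\end{equation*}
The original bound replaces the first term by $1/(c(\bfA)c(\bfA\inv))$. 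This uses $\abs{\bm\xi^*\bfA\bm\xi}\ge c(\bfA)\norm{\bm\xi}^2$ and $\abs{\bm\xi^*\bfA\bm\xi}\ge c(\bfA\inv)\norm{\bfA\bm\xi}^2$; the second follows by substituting $\bfy=\bfA\bm\xi$ in the definition of $c(\bfA\inv)$.

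By the definition of $c_{\rm n}$ in~\eqref{eq: normalized Crawford}, the same ratio is bounded directly by $1/c_{\rm n}(\bfA)^2 - 1 = \delta_{\rm n}(\bfA)^2$. So the inequality $\norm{\bfw}\le\delta_{\rm n}(\bfA)\abs{\lambda}\norm{\bm\xi}$ holds for every $\bm\xi$. Every later step of the two appendix proofs is then unchanged, with $\hat\delta_k$ now any upper bound on $\delta_{\rm n}(\bfA_k)$.

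The main obstacle is to confirm that Lemma~\ref{le: orthogonal bound}, and the proof of~\eqref{eq: parallel bounds c inv}, use $\delta$ only through this pointwise ratio and not through $c(\bfA)$ and $c(\bfA\inv)$ separately. Separate use would break the substitution. I would check this by reading those appendices. If a separate use appears, I would keep the separate $\hat c_k,\hat r_k$ factors, which remain valid, and substitute only at the point where the product $c(\bfA)c(\bfA\inv)$ is formed.

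For the second identity in the definition, $c_{\rm n}(\bfA\inv)=c_{\rm n}(\bfA)$, substitute $\bfx=\bfA\bfy$. This gives $\abs{\bfy^*\bfA^*\bfy}/(\norm{\bfA\bfy}\norm{\bfy})$, and the modulus is invariant under conjugate transpose.

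For (ii), the two inequalities above multiply to give $\abs{\bfx^*\bfA\bfx}^2 \ge c(\bfA)c(\bfA\inv)\norm{\bfx}^2\norm{\bfA\bfx}^2$ for all $\bfx$. Taking the minimum gives $c_{\rm n}(\bfA)^2\ge c(\bfA)c(\bfA\inv)$, hence $\delta_{\rm n}\le\delta$.

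For (iii), let $\bfA=\alpha\bfU$ with $\bfU$ unitary. Then $\norm{\bfA\bfx}=\abs{\alpha}\norm{\bfx}$, so $c_{\rm n}(\bfA)=c(\bfA)/\abs{\alpha}$. Also $\bfA\inv=\alpha\inv\bfU^*$, so $c(\bfA\inv)=c(\bfU^*)/\abs{\alpha}=c(\bfA)/\abs{\alpha}^2$, using $\calW(\bfU^*)=\calW(\bfU)^*$. Therefore $c(\bfA)c(\bfA\inv)=c_{\rm n}(\bfA)^2$, which gives equality.
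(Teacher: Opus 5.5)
You prove the same pointwise inequality as the paper. Your steps (ii) and (iii) match it, and Appendix~\ref{ap: proof normalized defect} likewise only re-proves Lemma~\ref{le: orthogonal bound} via the pointwise ratio. For Theorem~\ref{th: parallel} this suffices, because Appendix~\ref{ap: proof parallel} uses $\delta$ only through that lemma.

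For Theorem~\ref{th: series}, however, the obstacle you flagged is real, and your claim that every later step is unchanged is false. The paper's proof has the same gap. The $(2,2)$ entry $1+\hat\delta_k^2$ of $\bfD_k$ comes from $\norm{\bfP\bfA_k/z_k}\le\norm{\bfA_k}/c(\bfA_k)\le 1+\delta(\bfA_k)^2$. The last inequality is Remark~\ref{re: defect as spread}, i.e.\ the separate bound $c(\bfA_k\inv)\le 1/\norm{\bfA_k}$, not the pointwise ratio. It fails for $\delta_{\rm n}$. For $\bfA=\mathrm{diag}(1,4)$ and $\bfx=e_1$ one has $z=1$ and $\norm{\bfP\bfA}=4$. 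Yet $c_{\rm n}(\bfA)=4/5$, so $1+\delta_{\rm n}(\bfA)^2=25/16$. Your fallback does not rescue the statement. Keeping $\norm{\bfA_k}/c(\bfA_k)$ gives a valid but different $\bfD_k$. Substituting $1/c_{\rm n}(\bfA_k)^2$ for $1/(c(\bfA_k)c(\bfA_k\inv))$ at that point is exactly the false inequality.

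This entry does not affect $d$ when $N=2$, where $d=\hat\delta_1\hat\delta_2$; that is the case used in Section~\ref{se: example 3}. For $N\ge 3$, though, the intermediate estimate $\abs{\omega_N}\le d$ genuinely fails. Take $\bfA_2=\mathrm{diag}(1,4)$, let $\bfA_1$ be the real plane rotation by $\epsilon=\arctan 0.2$, set $\bfA_3=\bfA_1^*$, and take $\bfx=(\cos t,\sin t)$ with $\tan t\approx 0.39$. The normalized defects are $0.2$, $0.75$, $0.2$, so $d=0.3625$. With $q(\theta)=\cos^2\theta+4\sin^2\theta$, one gets $1+\omega_3=q(t+\epsilon)/(q(t)\cos^2\epsilon)\approx 1.394$. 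The product is unitarily similar to $\bfA_2$, so the theorem's conclusions still hold here; it is the argument that breaks.

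A $\delta_{\rm n}$-valid replacement for the entry is $(\hat\delta_k+\sqrt{1+\hat\delta_k^2})^2$. It follows from the Asplund--Pt\'ak identity $\min_{\lambda\in\bbC}\norm{\lambda\bfA-\rmI}=\sqrt{1-c_{\rm n}(\bfA)^2}$, but it changes $d$. The same identity also covers a step you passed over even for $N=2$. The bound on $\tilde\bfy_k$ needs $\delta_{\rm n}(\bfA_k^*)=\delta_{\rm n}(\bfA_k)$. Unlike $\delta(\bfA^*)=\delta(\bfA)$, this does not follow from the definition, because $c_{\rm n}(\bfA^*)$ involves $\norm{\bfA^*\bfx}$.
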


\begin{IEEEproof}
   See Appendix~\ref{ap: proof normalized defect}.
\end{IEEEproof}

\begin{figure}
   \centering
   \includegraphics{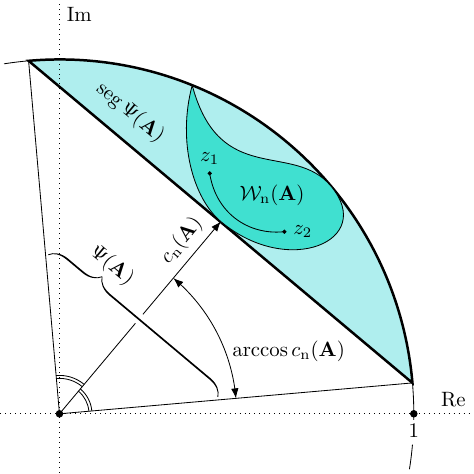}
   \caption{Normalized numerical range $\calW_{\rm n}(\bfA)$ of a sectorial matrix $\bfA$, its normalized Crawford number $c_{\rm n}(\bfA)$, and segmental phase $\Psi(\bfA)$. Points $z_1, z_2 \in \calW_{\rm n}(\bfA)$ connected by a hyperbolic arc illustrate the proof of Lemma~\ref{le: segmental radius}.}
   \label{fig: normalized range}
\end{figure}

The forthcoming Lemma~\ref{le: segmental radius} relates $\delta_{\rm n}(\bfA)$ to the \emph{segmental phase} $\Psi(\bfA)$ which is defined in~\cite{chenCyclicSmallPhase2026} as the minimal phase arc whose chord segment contains $\calW_{\rm n}(\bfA)$:
\begin{equation}
   \label{eq: segmental bound}
   \calW_{\rm n}(\bfA) \subseteq \seg\Psi(\bfA)
\end{equation}
where (see Fig.~\ref{fig: normalized range})
\begin{equation}
   \seg\Psi(\bfA) = \conv\set[\big]{\ee^{jx} \colon [x] \in \Psi(\bfA)}.
\end{equation}
Minimality of the segment can be expressed as
\begin{equation}
   \label{eq: minimal segment}
   \tfrac12\len\Psi(\bfA) = \min_{\theta\in[0,2\pi]}
   \arccos\min_{z \in \calW_{\rm n}(\bfA)} \real(\ee^{j\theta} z)
\end{equation}
where $\tfrac12\len\Psi(\bfA)$ is the \emph{segmental phase radius}\footnote{Denoted $r^\star$ in~\cite{chenCyclicSmallPhase2026} and called the \emph{amplitude} $\operatorname{am}\bfA$ in~\cite{kreinAngularLocalizationSpectrum1969}.} of $\bfA$.

\begin{lemma}
   \label{le: segmental radius}
   If $c_{\rm n}(\bfA) > 0$,
   \begin{equation}
      \label{eq: segmental radius}
      \tfrac12\len\Psi(\bfA) = \arccos c_{\rm n}(\bfA) = \arctan \delta_{\rm n}(\bfA).
   \end{equation}
\end{lemma}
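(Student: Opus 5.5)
The second equality in~\eqref{eq: segmental radius} is pure algebra. Put $\rho = \arccos c_{\rm n}(\bfA) \in [0, \tfrac\pi2)$. Then $\tan\rho = \sqrt{1/c_{\rm n}^2 - 1} = \delta_{\rm n}(\bfA)$, using the second expression in the definition of $\delta_{\rm n}$. So the work is in the first equality, and I would derive it from the variational formula~\eqref{eq: minimal segment}.

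For a phase arc of half-length $\rho \le \tfrac\pi2$ centred at $[-\theta]$, the segment $\seg$ is the intersection of the closed unit disk with the half-plane $\set{z \colon \real(\ee^{j\theta} z) \ge \cos\rho}$. Since $\calW_{\rm n}(\bfA)$ lies in the unit disk by Cauchy--Schwarz, \eqref{eq: minimal segment} says
\begin{equation*}
   \cos\p[\big]{\tfrac12\len\Psi(\bfA)} = \max_\theta h(\theta), \qquad h(\theta) = \min_{z\in\calW_{\rm n}(\bfA)} \real(\ee^{j\theta}z).
\end{equation*}
Because $c_{\rm n}(\bfA) > 0$ and $\calW_{\rm n}(\bfA)$ is compact, the standard separation/duality argument gives $\max_\theta h(\theta) = \dist\p[\big]{0, \conv\calW_{\rm n}(\bfA)}$, provided $0 \notin \conv\calW_{\rm n}(\bfA)$. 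This last condition will come out of the next step. The claim is therefore equivalent to
\begin{equation*}
   \dist\p[\big]{0, \conv\calW_{\rm n}(\bfA)} = \dist\p[\big]{0, \calW_{\rm n}(\bfA)} = c_{\rm n}(\bfA).
\end{equation*}
The inequality "$\le$" is trivial. For "$\ge$", let $p$ be the point of $\conv\calW_{\rm n}(\bfA)$ nearest to $0$. Since $p$ lies on the boundary of a planar convex hull, it sits on a segment $[z_1, z_2]$ with $z_1, z_2 \in \calW_{\rm n}(\bfA)$ (Carathéodory on a supporting line). It then suffices to show that $\calW_{\rm n}(\bfA)$ contains a curve from $z_1$ to $z_2$ that lies in the closed triangle $\conv\set{0, z_1, z_2}$. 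Such a curve must meet the ray from $0$ through $p$ at a point of modulus $\le \abs{p}$, which gives $c_{\rm n}(\bfA) \le \abs{p}$.

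The main obstacle is this connecting-curve property, namely the hyperbolic arc of Fig.~\ref{fig: normalized range}. I would realise $z_k$ by vectors $\bfx_k$ and restrict to $\bfx = \bfx_1 + t\ee^{j\varphi}\bfx_2$ with $t \in [0,\infty]$. Here the phase $\varphi$ is chosen so that the cross terms of $\bfx^*\bfA\bfx$ align with the chord; the sectoriality of $\bfA$ is available to make such a choice possible. Along this family,
\begin{equation*}
   z(t) = \frac{\bfx^*\bfA\bfx}{\norm{\bfx}\norm{\bfA\bfx}}
\end{equation*}
is a ratio of a Hermitian-type quadratic form in $(1, t\ee^{j\varphi})$ to the geometric mean of two positive quadratic forms. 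I would try to show that its numerator moves along the segment direction while the normalising denominator is log-concave in $t$. Equivalently, after the Cayley-type change of variables I would check that $z(t)$ traces a Poincaré-disk geodesic between $z_1$ and $z_2$, or at least a curve on the origin side of the chord. Such a curve bends toward the origin and so stays inside $\conv\set{0, z_1, z_2}$.

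If the direct computation proves unwieldy, my fallback is the known angular-localisation result of~\cite{kreinAngularLocalizationSpectrum1969} and~\cite{chenCyclicSmallPhase2026}: the amplitude equals the largest angle between $\bfx$ and $\bfA\bfx$. Then $\cos(\operatorname{am}\bfA) = \min_{\bfx} \abs{\bfx^*\bfA\bfx}/(\norm{\bfx}\norm{\bfA\bfx}) = c_{\rm n}(\bfA)$ directly. The remaining work would be to verify that this characterisation matches definition~\eqref{eq: minimal segment}.
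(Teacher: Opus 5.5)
Your overall route is the paper's. Your separation step, $\cos\bigl(\tfrac12\len\Psi(\bfA)\bigr) = \max_\theta h(\theta) = \dist(0, \conv\calW_{\rm n}(\bfA))$, is the paper's minimax swap. Everything then hinges on $\dist(0,\conv\calW_{\rm n}(\bfA)) = c_{\rm n}(\bfA)$, which both you and the paper derive from a connecting-curve (``convex as seen from the origin'') property. Your nearest-point/supporting-line argument spells out a step the paper leaves implicit.

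The gap is that you never establish that property, and it is the only nontrivial ingredient. The paper does not prove it either; it imports it from \cite[Lemma~2.5]{linsNormalizedNumericalRange2018}: any $z_1,z_2\in\calW_{\rm n}(\bfA)$ are joined by a hyperbolic arc centred at $0$, contained in $\calW_{\rm n}(\bfA)$ and hence in $\conv\set{0,z_1,z_2}$. Neither of your routes supplies it.
\begin{itemize}
   \item The fixed-phase pencil $\bfx_1 + t\ee^{j\varphi}\bfx_2$ traces only an arc of one plane section of the ellipsoid $\set{(\bfx^*\bfA\bfx, \norm{\bfA\bfx}^2)\colon \bfx\in\operatorname{span}\set{\bfx_1,\bfx_2}, \norm{\bfx}=1}$. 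Aligning the cross term with the chord gives no control over whether that arc stays in the angular sector between $z_1$ and $z_2$ and on the origin side of the chord. The log-concavity and Poincar\'e-geodesic claims are hopes, not steps.
   \item The fallback is circular. Krein's amplitude, as in~\eqref{eq: minimal segment}, is $\min_\theta\max_{\bfx}$ of the real angle between $\bfx$ and $\ee^{j\theta}\bfA\bfx$. Identifying it with $\max_{\bfx}\arccos\frac{\abs{\bfx^*\bfA\bfx}}{\norm{\bfx}\norm{\bfA\bfx}}$ is exactly the max--min exchange you are trying to prove.
\end{itemize}

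One self-contained fix, for invertible $\bfA$ (e.g.\ sectorial), goes as follows. For unit $\bfx$, write $\frac{\bfx^*\bfA\bfx}{\norm{\bfA\bfx}} = \zeta/\sqrt{\tau}$ with $(\zeta,\tau) = (\bfx^*\bfA\bfx,\norm{\bfA\bfx}^2)$. Over unit $\bfx\in\operatorname{span}\set{\bfx_1,\bfx_2}$ these points form an ellipsoid surface $E\subset\bbC\times\bbR$; if $E$ is degenerate, then $E=\conv E$. On $\conv E$ we have $\tau\geq\sigma_{\min}(\bfA)^2>0$.

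Each fibre $\set{(\lambda w,\lambda^2)\colon\lambda>0}$ of $(\zeta,\tau)\mapsto\zeta/\sqrt\tau$ is connected and tends both to the origin (outside $\conv E$) and to infinity. So if it meets $\conv E$, it meets $E$. Hence $E$ and $\conv E$ have the same image, and that image lies in $\calW_{\rm n}(\bfA)$.

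The chord between the points of $E$ realizing $z_1$ and $z_2$ lies in $\conv E$ and maps to
\[
   s\mapsto\frac{(1-s)\sqrt{\tau_1}\,z_1 + s\sqrt{\tau_2}\,z_2}{\sqrt{(1-s)\tau_1+s\tau_2}},\qquad s\in[0,1].
\]
Its coefficients are nonnegative and sum to at most $1$ by concavity of the square root. So this curve lies in $\calW_{\rm n}(\bfA)\cap\conv\set{0,z_1,z_2}$, which is the arc you need.

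Your nearest-point argument also needs $p\neq 0$, so first exclude $0\in\conv\calW_{\rm n}(\bfA)$. The same property does this. Suppose $0$ lies in a triangle with vertices $z_1,z_2,z_3\in\calW_{\rm n}(\bfA)$. The arc from $z_2$ to $z_3$ meets the ray opposite to $z_1$ at some $z_4$. The arc from $z_1$ to $z_4$ must then pass through $0$, contradicting $c_{\rm n}(\bfA)>0$.
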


\begin{IEEEproof}
   By~\cite[Lemma~2.5]{linsNormalizedNumericalRange2018}, $\calW_{\rm n}(\bfA)$ is ``convex as seen from the origin'': precisely, every pair $z_1, z_2 \in \calW_{\rm n}(\bfA)$ can be connected by an arc of a hyperbola centered at $0$ and contained in $\calW_{\rm n}(\bfA)$. Optimization problems in~\eqref{eq: normalized Crawford} and~\eqref{eq: minimal segment} are thus convex, dual, and have the same optimal value. See Fig.~\ref{fig: normalized range} for an illustration or Appendix~\ref{ap: proof segmental radius} for a formal proof.
\end{IEEEproof}

By Proposition~\ref{pr: normalized defect} and Lemma~\ref{le: segmental radius}, in the sectorial case
\begin{equation}
   \tfrac12\len\Psi(\bfA) \leq \arctan\delta(\bfA).
\end{equation}
Recall that the angle $\arctan\delta(\cdot)$ is subadditive with respect to matrix multiplication due to Remark~\ref{re: two matrix series}:
\begin{equation}
   \arctan\delta(\bfA_N \bfA_{N-1} \dots \bfA_1)
   \leq \sum_{k=1}^N \arctan\delta(\bfA_k).
\end{equation}
The same property holds for $\tfrac12\len\Psi(\cdot)$~\cite{kreinAngularLocalizationSpectrum1969}:
\begin{equation}
   \label{eq: segmental radius sum}
   \tfrac12\len\Psi(\bfA_N \bfA_{N-1} \dots \bfA_1)
   \leq \sum_{k=1}^{N} \tfrac12\len\Psi(\bfA_k).
\end{equation}
A natural conclusion seems to be that in our results $\delta(\cdot)$ is nothing more than an upper bound on $\delta_{\rm n}(\cdot) = \tan\p[\big]{\tfrac12\len\Psi(\cdot)}$. Had we a more direct bound on $\Psi(\cdot)$, it could be used to improve Theorems~\ref{th: series} and~\ref{th: parallel}. This suggests an alternative approach: instead of bounding $\calS(\cdot)$, it may be possible to develop a composable set of bounds on $\calW(\cdot)$ and $\calW_{\rm n}(\cdot)$.

For example, assuming that all involved matrices are sectorial, consider ring-sectorial bounds on $\calW(\cdot)$ and \emph{segmental bounds}~\eqref{eq: segmental bound} on $\calW_{\rm n}(\cdot)$. Ring-sectorial bounds on $\calW(\cdot)$ are still composable via Theorems~\ref{th: series} and~\ref{th: parallel} with $\delta(\cdot)$ replaced by $\delta_{\rm n}(\cdot) = \tan\p[\big]{\tfrac12\len\Psi(\cdot)}$ and $\Psi(\cdot)$ taken from~\eqref{eq: segmental bound}. As for the segmental bounds on $\calW_{\rm n}(\cdot)$, it is known that segmental phases are \emph{not} composable\footnote{The author thanks Prof. Li Qiu for pointing this out.}: in general,
\begin{equation}
   \label{eq: segmental phase sum wrong}
   \Psi(\bfA_N \bfA_{N-1} \dots \bfA_1) \not\subseteq \sum_{k=1}^N \Psi(\bfA_k).
\end{equation}
The caveat is that $\Psi(\cdot)$ is defined as the arc of the \emph{minimal} chord segment, but the arc on the right-hand side of~\eqref{eq: segmental phase sum wrong} is not necessarily minimal. However, if we view the segmental bounds~\eqref{eq: segmental bound} more loosely, without imposing minimality\footnote{This more general point of view also appears in~\cite[Section~VI]{chenCyclicSmallPhase2026} where it leads to the definition of a \emph{$\gamma$-segmental phase}.}, then they are indeed composable in the following sense.

\begin{proposition}[Segmental bound compositions]
   \label{pr: segmental compositions}
   Let matrices $\bfA_1, \bfA_2, \dots, \bfA_N \in \bbC^{n\times n}$ have known segmental bounds
   \begin{equation}
      \calW_{\rm n}(\bfA_k) \subseteq \seg\Psi_k, \quad
      \Psi_k \subseteq \bbT, \quad k = 1, 2, \dots, N.
   \end{equation}
   Then
   \begin{subequations}
      \begin{align}
         \label{eq: segmental phase summed}
         \calW_{\rm n}(\bfA_N \bfA_{N-1} \dots \bfA_1) &\subseteq
         \seg \sum_{k=1}^N \Psi_k, \\
         \label{eq: segmental phase joined}
         \calW_{\rm n}(\bfA_1 + \bfA_2 + \dots + \bfA_N) &\subseteq
         \seg \bigvee_{k=1}^N \Psi_k
      \end{align}
   \end{subequations}
   and $\calW_{\rm n}\p[\big]{(\rmI + \bfA_1\bfA_2)\inv \bfA_1}$ can be bounded using~\eqref{eq: segmental phase joined} and the relation $\calW_{\rm n}(\bfA\inv) = \calW_{\rm n}(\bfA)^*$.
\end{proposition}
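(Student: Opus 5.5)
The plan is to recast a segmental bound as an \emph{angle} bound in the real inner-product space $\bbR^{2n} \cong \bbC^n$, where compositions become triangle inequalities. Write a closed phase arc as $\Psi = \arc[\theta - \rho, \theta + \rho]$ with center $[\theta]$ and radius $\rho = \tfrac12\len\Psi \in [0, \pi]$. If $\Psi = \bbT$, i.e.\ $\rho = \pi$, then $\seg\Psi$ is the closed unit disk. By Cauchy--Schwarz this disk contains every normalized range, so that case is trivial.

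For $\rho < \pi$, the set $\seg\Psi$ is the part of the closed unit disk cut off by the chord: $\seg\Psi = \set{z \colon \abs{z} \leq 1,\ \real(\ee^{-j\theta} z) \geq \cos\rho}$. Since $\abs{\bfx^*\bfA\bfx} \leq \norm{\bfx}\norm{\bfA\bfx}$ always holds, the inclusion $\calW_{\rm n}(\bfA) \subseteq \seg\Psi$ is equivalent to
\begin{equation*}
   \real\p[\big]{\bfx^* \ee^{-j\theta}\bfA\bfx} \geq \cos\rho\, \norm{\bfx}\norm{\bfA\bfx}
   \quad \text{for all } \bfx \text{ with } \bfA\bfx \neq 0.
\end{equation*}
Let $\angle(\bfu, \bfv) \in [0, \pi]$ denote the real angle between $\bfu$ and $\bfv$ in $\bbR^{2n}$, i.e.\ $\cos\angle(\bfu,\bfv) = \real(\bfu^*\bfv)/(\norm{\bfu}\norm{\bfv})$. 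The condition then reads $\angle(\bfx, \ee^{-j\theta}\bfA\bfx) \leq \rho$. Two facts do all the work: this angle is a metric on the unit sphere of $\bbR^{2n}$, and it is invariant under multiplication of both vectors by a unimodal complex scalar.

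\emph{Series.} By induction it suffices to take $N = 2$. Let $\Psi_k$ have center $\theta_k$ and radius $\rho_k$. If $\rho_1 + \rho_2 \geq \pi$, then $\Psi_1 + \Psi_2 = \bbT$ and there is nothing to prove. Otherwise $\Psi_1 + \Psi_2$ is the arc with center $\theta_1 + \theta_2$ and radius $\rho_1 + \rho_2$. Fix $\bfx$ with $\bfA_2\bfA_1\bfx \neq 0$, so that $\bfy = \bfA_1\bfx \neq 0$. The triangle inequality gives
\begin{equation*}
   \angle(\bfx, \ee^{-j(\theta_1+\theta_2)}\bfA_2\bfA_1\bfx) \leq
   \angle(\bfx, \ee^{-j\theta_1}\bfy) + \angle(\ee^{-j\theta_1}\bfy, \ee^{-j\theta_1}\ee^{-j\theta_2}\bfA_2\bfy).
\end{equation*}
The first term is at most $\rho_1$ by the bound on $\bfA_1$. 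The second equals $\angle(\bfy, \ee^{-j\theta_2}\bfA_2\bfy)$ by unimodular invariance, so it is at most $\rho_2$. This proves~\eqref{eq: segmental phase summed}.

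\emph{Parallel.} If the join $\bigvee_k \Psi_k$ is $\bbT$, the claim is trivial. Otherwise it is a convex arc $\Psi$ with radius $\rho \leq \pi/2$, and $\Psi_k \subseteq \Psi$ implies $\seg\Psi_k \subseteq \seg\Psi$. Hence each $\bfA_k$ satisfies $\real(\bfx^*\ee^{-j\theta}\bfA_k\bfx) \geq \cos\rho\,\norm{\bfx}\norm{\bfA_k\bfx}$, where $\theta$ is the center of $\Psi$. This also holds when $\bfA_k\bfx = 0$, since both sides then vanish. Summing over $k$ and using $\cos\rho \geq 0$ together with the triangle inequality $\sum_k\norm{\bfA_k\bfx} \geq \norm{\bfB\bfx}$ gives the inequality for $\bfB = \sum_k \bfA_k$. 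That is~\eqref{eq: segmental phase joined}. The sign $\cos\rho \geq 0$, guaranteed by convexity of the join, is exactly what makes this step work.

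\emph{Feedback.} First verify $\calW_{\rm n}(\bfA\inv) = \calW_{\rm n}(\bfA)^*$. Substituting $\bfy = \bfA\bfx$ gives $\bfy^*\bfA\inv\bfy/(\norm{\bfy}\norm{\bfA\inv\bfy}) = \overline{\bfx^*\bfA\bfx}/(\norm{\bfA\bfx}\norm{\bfx})$. Since $(\seg\Psi)^* = \seg(-\Psi)$, we get $\calW_{\rm n}(\bfA_1\inv) \subseteq \seg(-\Psi_1)$. Applying the parallel rule to $\bfB\inv = \bfA_1\inv + \bfA_2$ gives $\calW_{\rm n}(\bfB\inv) \subseteq \seg\p{(-\Psi_1) \vee \Psi_2}$. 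Conjugating once more gives $\calW_{\rm n}(\bfB) \subseteq \seg\p[\big]{\Psi_1 \vee (-\Psi_2)}$.

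I expect the main obstacle to be bookkeeping rather than depth. One needs the equivalence between the chord description of $\seg\Psi$ and the half-plane inequality for every $\rho < \pi$, including $\rho > \pi/2$, where the chord lies on the far side of the origin. One also has to handle the wrap-around case in which the sum of arcs becomes $\bbT$. These should be checked carefully before the triangle-inequality argument is used.
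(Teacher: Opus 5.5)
Your proposal is correct and takes essentially the paper's route. The series inclusion follows from the triangle inequality for the real angle on the unit sphere of $\bbR^{2n}$, which is the mechanism of \cite[Theorem~1]{chenCyclicSmallPhase2026} with the eigenvector replaced by an arbitrary $\bfx$; the parallel inclusion follows from $\norm{\bfB\bfx}\leq\sum_k\norm{\bfA_k\bfx}$ together with $\cos\rho\geq 0$ for the convex join, and your explicit feedback bound $\calW_{\rm n}(\bfB)\subseteq\seg\p{\Psi_1\vee(-\Psi_2)}$ is also right under the tacit assumption, shared by the paper, that $\bfA_1$ and $\bfB$ are invertible.
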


\begin{IEEEproof}
   Bounds~\eqref{eq: segmental phase summed} can be established along the lines of the proof of~\cite[Theorem~1]{chenCyclicSmallPhase2026}, and~\eqref{eq: segmental phase joined} follows from the triangle inequality.
\end{IEEEproof}

A shortcoming of Proposition~\ref{pr: segmental compositions} is that~\eqref{eq: segmental phase joined} is loose even for scalar matrices. We conjecture that~\eqref{eq: segmental phase joined} can be tightened by supplementing $\Psi_k$ with ring-sectorial bounds on $\calW(\bfA_k)$.

Let us conclude this section by reiterating that the above hybrid sectorial-segmental idea is still restricted to sectorial matrices.

\section{Examples}
\label{se: examples}

The compositional algorithm of Section~\ref{se: compositional small gain phase} can be unwieldy in dealing with deep hierarchies of elementary interconnections. Generally, one can expect to attain only \emph{procedural} (as opposed to \emph{declarative}) stability conditions. Procedural conditions must be checked numerically going through the composition process. As usual, conservatism may be reduced to some extent by clever block diagram transformations such as \emph{loop shifting}~\cite[Theorem~6.6.22]{vidyasagarNonlinearSystemsAnalysis1993}, \emph{scaling}~\cite[Theorem~11.3]{zhouRobustOptimalControl1996}, and \emph{node splitting} (Fig.~\ref{fig: node splitting} and Remark~\ref{re: defect as spread} above). We postpone such examples until a future publication.

In the present section, we consider ``shallow'' systems depicted in Figs.~\ref{fig: example 1}, \ref{fig: example 2}, and~\ref{fig: example 3} where simple \emph{declarative} stability conditions are possible (Propositions~\ref{pr: example 1 result}, \ref{pr: example 2 result}, and~\ref{pr: example 3 result}).

\subsection{Perturbation of a Single Loop}
\label{se: example 1}

\begin{figure}
   \centering
   \begin{subfigure}[b]{30mm}
      \centering
      \includegraphics{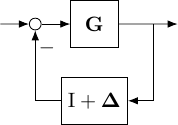}
      \caption{}
      \label{fig: example 1 single loop}
   \end{subfigure}
   \begin{subfigure}[b]{50mm}
      \centering
      \includegraphics{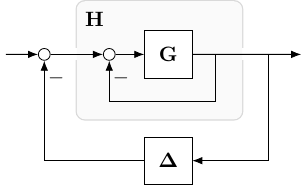}
      \caption{}
      \label{fig: example 1 double loop}
   \end{subfigure}
   \caption{Example of Section~\ref{se: example 1} as a single loop (a) and nested loops (b) prepared for the compositional analysis.}
   \label{fig: example 1}
\end{figure}

Consider the loop in Fig.~\ref{fig: example 1 single loop}: plant $\bfG$ with feedback $\rmI + \bfDelta$ where $\norm{\bfDelta(j\omega)} \leq \epsilon < 1$ at each $\omega \in [0,\infty]$. We assume that bounds on $\norm{\bfG}$ and $\calS(\bfG)$ are available and take bounds on $\calS(\rmI + \bfDelta)$ from Example~\ref{ex: near identity}.

The standard small-gain-or-phase Theorem~\ref{th: small gain phase} asserts that the system in Fig.~\ref{fig: example 1 single loop} is internally stable if at every $\omega \in [0,\infty]$
\begin{subequations}
   \label{eq: example 1 small gain phase}
   \begin{align}
      & \norm{\bfG} < \frac{1}{1+\epsilon} \\
      \text{or}\quad & \dist\p[\big]{[\pi], \Phi(\bfG)} > \arcsin\epsilon
   \end{align}
\end{subequations}
where $\bfG = \bfG(j\omega)$. Graphically, conditions~\eqref{eq: example 1 small gain phase} require that the wedge-shaped bounds on $\calW(\bfG)$ specified by $\norm{\bfG}$ and $\Phi(\bfG)$ remain strictly inside the region seen in Fig.~\ref{fig: example 1 smaller region}.

Aiming to relax~\eqref{eq: example 1 small gain phase}, we put the system into the nested loop form (Fig.~\ref{fig: example 1 double loop}) and apply compositional analysis.

We start by looking at the inner (``nominal'') loop. It is internally stable if \eqref{eq: example 1 small gain phase} holds with $\epsilon = 0$.

Next we proceed to the outer loop. Since $\bfDelta$ has no phase bounds, we impose the pure small-gain condition
\begin{equation}
   \label{eq: example 1 outer small gain}
   \norm{\bfH} < \frac{1}{\epsilon}
\end{equation}
where $\bfH = (\rmI + \bfG)\inv\bfG$ is the inner loop's transfer matrix. Bounds on $\norm{\bfH}$ are found in Theorems~\ref{th: norm compositions} and~\ref{th: feedback norm}:
\begin{subequations}
   \begin{align}
      \norm{\bfH} &\leq \frac{1}{\norm{\bfG}\inv - 1}, \\
      \norm{\bfH} &\leq \frac{1}
      {\sin\min\set[\big]{\dist\p[\big]{[\pi] - \Phi(\bfG), [0]},
      \tfrac\pi2}}, \\
      \norm{\bfH} &\leq \frac{1}{1 - c(\bfG)\inv}.
   \end{align}
\end{subequations}
These bounds, respectively, suggest the following alternatives, each of them sufficient for~\eqref{eq: example 1 outer small gain}:
\begin{subequations}
   \label{eq: example 1 outer loop conditions}
   \begin{align}
      &\norm{\bfG} < \frac{1}{1 + \epsilon} \\
      \text{or}\quad &\dist\p[\big]{[\pi], \Phi(\bfG)} > \arcsin\epsilon \\
      \text{or}\quad &c(\bfG) > \frac{1}{1 - \epsilon}.
   \end{align}
\end{subequations}
We conclude that the system is internally stable if, firstly, \eqref{eq: example 1 small gain phase} holds with $\epsilon = 0$ and, secondly, at least one of the options in~\eqref{eq: example 1 outer loop conditions} holds. This results in the following statement.

\begin{proposition}
   \label{pr: example 1 result}
   Let $\bfG, \bfDelta \in \RHinfnn$. The system in Fig.~\ref{fig: example 1} is internally stable if at every frequency $\omega \in [0, \infty]$
   \begin{equation}
      \norm{\bfDelta} \leq \epsilon < 1
   \end{equation}
   and at least one of the following conditions holds:
   \begin{subequations}
      \label{eq: example 1 result}
      \begin{align}
         \label{eq: example 1 result small gain}
         & \hphantom{\Bigg\{}
         \norm{\bfG} < \frac{1}{1+\epsilon} \\
         \label{eq: example 1 result small phase}
         \text{or}\quad & \hphantom{\Bigg\{}
         \dist\p[\big]{[\pi], \Phi(\bfG)} > \arcsin\epsilon \\
         \label{eq: example 1 result high gain small phase}
         \text{or}\quad & \begin{cases}
            c(\bfG) > \dfrac{1}{1-\epsilon}, \\
            \dist\p[\big]{[\pi], \Phi(\bfG)} > 0
         \end{cases}
      \end{align}
   \end{subequations}
   where $\bfDelta = \bfDelta(j\omega)$ and $\bfG = \bfG(j\omega)$.
\end{proposition}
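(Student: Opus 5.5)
The plan is to follow the compositional algorithm of Section~\ref{se: compositional small gain phase} on the nested-loop form of Fig.~\ref{fig: example 1 double loop}. First I would argue that the realization obtained by wiring the minimal realizations of $\bfG$ and $\bfDelta$ in Fig.~\ref{fig: example 1 double loop} has the same state vector and the same closed-loop dynamics as the one in Fig.~\ref{fig: example 1 single loop}. The feedback $\rmI + \bfDelta$ is just the parallel connection of the static identity and $\bfDelta$, so splitting it into an inner unity loop and an outer $\bfDelta$-loop adds no states. Hence internal stability of the nested form is equivalent to that of the original.

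Then I would verify two loops, at each frequency separately (Theorem~\ref{th: small gain phase} only needs a pointwise condition):
\begin{itemize}
   \item the inner loop with $\bfA_1 = \bfG$, $\bfA_2 = \rmI$;
   \item the outer loop with $\bfA_1 = \bfH = (\rmI+\bfG)\inv\bfG$ (which lies in $\RHinfnn$ once the inner loop is internally stable) and $\bfA_2 = \bfDelta$.
\end{itemize}
Since $\bfDelta$ has no phase information, the outer loop is checked by the small-gain condition, which is $\norm{\bfH} < 1/\epsilon$ because $\norm{\bfDelta} \leq \epsilon$. If $\epsilon = 0$ this is vacuous.

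Next, I would treat each alternative of~\eqref{eq: example 1 result} as a separate case and show that it implies both checks.
\begin{itemize}
   \item \emph{Case~\eqref{eq: example 1 result small gain}.} Since $\norm{\bfG} < 1/(1+\epsilon) \leq 1$, the inner loop passes small-gain with $\norm{\rmI} = 1$. Theorem~\ref{th: norm compositions} then gives $\norm{\bfH} \leq 1/(\norm{\bfG}\inv - 1)$, and $\norm{\bfG}\inv - 1 > \epsilon$ yields $\norm{\bfH} < 1/\epsilon$.
   \item \emph{Case~\eqref{eq: example 1 result small phase}.} The phase gap $\dist([\pi],\Phi(\bfG)) > \arcsin\epsilon \geq 0$ gives the inner small-phase condition, since $\Phi(\rmI) = [0]$. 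It also shows $\Phi(\bfG) \neq \bbT$, so $\bfG$ is sectorial and Theorem~\ref{th: feedback norm} applies. There $\theta\mmin = \dist([\pi] - \Phi(\bfG), [0]) = \dist([\pi], \Phi(\bfG))$, and $c(\rmI) = 1$ makes the max in the denominator at least $1$. So $\norm{\bfH} \leq 1/\sin\min\{\theta\mmin, \pi/2\} < 1/\epsilon$, by monotonicity of $\sin$ on $[0,\pi/2]$ and $\theta\mmin > \arcsin\epsilon$.
   \item \emph{Case~\eqref{eq: example 1 result high gain small phase}.} The second line gives the inner small-phase check and sectoriality as before. Then $c(\bfG)\,c(\rmI) = c(\bfG) > 1/(1-\epsilon) \geq 1$ enables the high-gain bound~\eqref{eq: feedback norm bound high gain}: $\norm{\bfH} \leq 1/(1 - c(\bfG)\inv)$. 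Since $1 - c(\bfG)\inv > \epsilon$, again $\norm{\bfH} < 1/\epsilon$.
\end{itemize}

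In every case both loops pass Theorem~\ref{th: small gain phase} at every $\omega \in [0,\infty]$, so the nested system, and hence the original one, is internally stable.

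The main obstacle I expect is not the inequalities but the bookkeeping around the structure. One must check that the nested decomposition really is a state-space-preserving rewrite of Fig.~\ref{fig: example 1 single loop}, including well-posedness at $\omega = \infty$. One must also make sure that the frequency-wise mixing of cases is legitimate. For the mixing, I would rely on the fact that Theorem~\ref{th: small gain phase} allows a different condition at each frequency, so that the outer loop may switch between the bounds of Theorems~\ref{th: norm compositions} and~\ref{th: feedback norm} from frequency to frequency.
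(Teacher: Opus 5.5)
Your proposal is correct and follows the paper's derivation essentially step for step. You use the same nested-loop rewrite and the same inner check (the small-gain-or-phase condition with $\epsilon = 0$), and you verify the outer loop by the pure small-gain test $\norm{\bfH} < 1/\epsilon$ using the same three bounds from Theorems~\ref{th: norm compositions} and~\ref{th: feedback norm}; your case-by-case verification only repackages the paper's conjunction ``inner condition and one of~\eqref{eq: example 1 outer loop conditions}.'' One small nit: you only need, and correctly have, the implication that internal stability of the nested form gives internal stability of the single loop. Calling the two forms ``equivalent'' overstates this, since the single loop can be well-posed while $\rmI + \bfG(\infty)$ is singular.
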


Condition~\eqref{eq: example 1 result small gain} is \emph{small-gain}, \eqref{eq: example 1 result small phase} is \emph{small-phase}, and~\eqref{eq: example 1 result high gain small phase} may be called \emph{high-gain-and-small-phase}.

Proposition~\ref{pr: example 1 result} stipulates that the ring-sectorial bounds on $\calW(\bfG)$ formed by $\norm{\bfG}$, $\Phi(\bfG)$, and $c(\bfG)$ remain strictly inside the region shown in Fig.~\ref{fig: example 1 larger region}. This region is larger than the one in Fig.~\ref{fig: example 1 smaller region} due to the high-gain part~\eqref{eq: example 1 result high gain small phase}.

\begin{figure}
   \centering
   \begin{subfigure}[b]{42mm}
      \centering
      \includegraphics{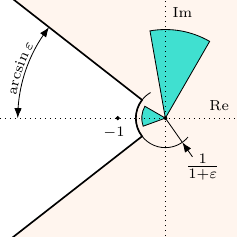}
      \caption{}
      \label{fig: example 1 smaller region}
   \end{subfigure}
   \begin{subfigure}[b]{42mm}
      \centering
      \includegraphics{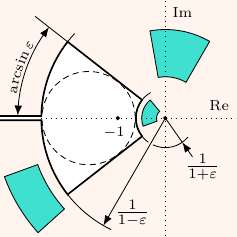}
      \caption{}
      \label{fig: example 1 larger region}
   \end{subfigure}
   \caption{Stability domains in the example of Section~\ref{se: example 1}: (a) illustrates the standard small-gain-or-phase condition~\eqref{eq: example 1 small gain phase}; (b) is the result of compositional analysis (Proposition~\ref{pr: example 1 result}). The turquoise regions are examples of numerical range bounds that satisfy the stability conditions. The dashed circle in (b) is Hall's $M$-circle with $M = \epsilon\inv$ (see Remark~\ref{re: hall}).}
   \label{fig: example 1 regions}
\end{figure}

\begin{remark}[Hall's circle]
   \label{re: hall}
   In the scalar case, condition~\eqref{eq: example 1 outer small gain} is equivalent to $G \in \set*{z \in \bbC \colon \abs*{\dfrac{z}{1 + z}} < \dfrac{1}{\epsilon}}$ and restricts the Nyquist curve to lie outside \emph{Hall's $M$-circle} (with $M = \epsilon\inv$)~\cite[Fig.~27]{hallAnalysisSynthesisLinear1943}. This circle coincides with the dashed circle inscribed in the exclusion region in Fig.~\ref{fig: example 1 larger region}. Thus, to relate our result to the scalar case, the region specified in Proposition~\ref{pr: example 1 result} can be described as: the whole plane, minus the ring-sectorial approximation of the Hall circle, minus the ray $[-\infty, -1]$. The ray is cut out due to the small-phase part of~\eqref{eq: example 1 result high gain small phase} which ensures stability in the nominal case ($\epsilon = 0$).
\end{remark}

\begin{remark}[Unconditional and high-gain stability]
   Suppose that $\bfG$ in Fig.~\ref{fig: example 1 single loop} is replaced with $k\bfG$ where $k \geq 0$. If the loop remains stable for all $k \in [0, 1]$ (and all admissible $\bfDelta$) then it is called (robustly) \emph{unconditionally stable}~\cite{bodeRelationsAttenuationPhase1940} with gain margin $k_* \in (1, \infty]$. Proposition~\ref{pr: example 1 result} restricted to the part ``\eqref{eq: example 1 result small gain} or~\eqref{eq: example 1 result small phase}'' is sufficient for the robust unconditional stability, and in this part compositional analysis yields nothing new compared to the classical condition~\eqref{eq: example 1 small gain phase}. A somewhat opposite property is when the loop remains stable for all $k \geq 1$ and all admissible $\bfDelta$~-- then it is called \emph{robustly high-gain stabilizable} with gain margin $k_* \in [0, 1)$~\cite[Section~5]{tsypkinHighgainRobustControl1999}. A sufficient condition of this is Proposition~\ref{pr: example 1 result} restricted to ``\eqref{eq: example 1 result small phase} or~\eqref{eq: example 1 result high gain small phase}'' which is new compared to~\eqref{eq: example 1 small gain phase}. If Proposition~\ref{pr: example 1 result} holds in its entirety, the system is stable but not necessarily unconditionally stable nor high-gain stabilizable.
\end{remark}

\subsection{Small-Feedback Perturbation of Subsystems}
\label{se: example 2}

\begin{figure}
   \centering
   \includegraphics{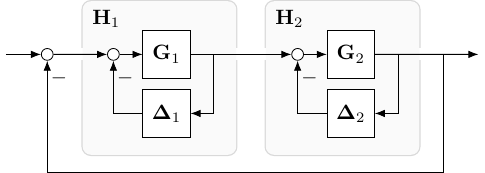}
   \caption{Example of Section~\ref{se: example 2}.}
   \label{fig: example 2}
\end{figure}

Consider the system in Fig.~\ref{fig: example 2}: a loop $\bfG_1 \rightleftharpoons \bfG_2$, but each $\bfG_k$ is perturbed by a small inner feedback $\bfDelta_k$ with $\norm{\bfDelta_k(j\omega)} \leq \epsilon_k$ for all $\omega \in [0,\infty]$.

The inner loops are stable if the small-gain condition
\begin{equation}
   \label{eq: example 2 inner small gain}
   \norm{\bfG_k(j\omega)} < \frac{1}{\epsilon_k}, \quad k = 1,2
\end{equation}
holds at every $\omega\in[0, \infty]$. The outer loop is stable if the small-gain-or-phase Theorem~\ref{th: small gain phase} holds with $\bfA_k = \bfH_k$ where $\bfH_k = (\rmI + \bfG_k \bfDelta_k)\inv \bfG_k$ is the transfer matrix of the $k$'th inner loop. We bound $\norm{\bfH_k}$ and $\Phi(\bfH_k)$ as
\begin{align}
   \label{eq: example 2 inner norm}
   \text{Theorem~\ref{th: norm compositions}}
   &\implies \norm{\bfH_k}
   \leq \frac{1}{\norm{\bfG_k}\inv - \epsilon_k}, \\
   \notag
   \text{Example~\ref{ex: bounds under additive perturbation}}
   &\implies \Phi(\bfH_k)
   = -\Phi\p[\big]{\bfG_k\inv + \bfDelta_k} \\
   \notag
   &\phantom{\implies \Phi(\bfH_k)}\;
   \subseteq -\Phi(\bfG_k\inv) + \arc[-\beta_k, \beta_k] \\
   \label{eq: example 2 inner sector}
   &\phantom{\implies \Phi(\bfH_k)}\;
   = \Phi(\bfG_k) + \arc[-\beta_k, \beta_k], \\
   &\phantom{{} \implies {}}
   \beta_k = \arcsin\dfrac{\epsilon_k}{c(\bfG_k\inv)}.
\end{align}
Plugging~\eqref{eq: example 2 inner norm} and~\eqref{eq: example 2 inner sector} into Theorem~\ref{th: small gain phase} and remembering to ensure~\eqref{eq: example 2 inner small gain} we arrive at the following result.

\begin{proposition}
   \label{pr: example 2 result}
   Let $\bfG_1, \bfG_2, \bfDelta_1, \bfDelta_2 \in \RHinfnn$. The system in Fig.~\ref{fig: example 2} is internally stable if at every $\omega\in[0,\infty]$
   \begin{equation}
      \norm{\bfDelta_k} \leq \epsilon_k, \quad
      \norm{\bfG_k} < \epsilon_k\inv, \quad
      k = 1, 2
   \end{equation}
   and at least one of the following conditions holds:
   \begin{subequations}
      \begin{align}
         \label{eq: example 2 small gain}
         & \p*{\frac{1}{\norm{\bfG_1}} - \epsilon_1}
         \p*{\frac{1}{\norm{\bfG_2}} - \epsilon_2} > 1 \\
         \label{eq: example 2 small phase}
         \text{or}\quad & \dist\p[\big]{[\pi], \Phi(\bfG_1) + \Phi(\bfG_2)}
         > \beta_1 + \beta_2
      \end{align}
   \end{subequations}
   where $\beta_k = \arcsin\dfrac{\epsilon_k}{c(\bfG_k\inv)}$,  $\bfDelta_k = \bfDelta_k(j\omega)$, and $\bfG_k = \bfG_k(j\omega)$. Condition~\eqref{eq: example 2 small phase} is only relevant if both $\bfG_k$ are sectorial and $\epsilon_k < c(\bfG_k\inv)$.
\end{proposition}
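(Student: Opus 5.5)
The plan follows the compositional algorithm of Section~\ref{se: compositional small gain phase} on the two-level hierarchy of Fig.~\ref{fig: example 2}. At the bottom level sit the two inner loops $\bfH_k = (\rmI + \bfG_k\bfDelta_k)\inv\bfG_k$; at the top sits the outer loop $\bfH_1 \rightleftharpoons \bfH_2$. The system is decomposable, so internal stability follows once each elementary feedback loop is shown to be internally stable.

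\emph{Inner loops.} Since $\norm{\bfG_k}\norm{\bfDelta_k} \leq \norm{\bfG_k}\epsilon_k < 1$ at every frequency, the small-gain part~\eqref{eq: small gain} of Theorem~\ref{th: small gain phase} gives internal stability of each inner loop and $\bfH_k \in \RHinfnn$. The bound~\eqref{eq: example 2 inner norm} then follows from~\eqref{eq: norm composition feedback} with $\hat\gamma_1 = \norm{\bfG_k}$ and $\hat\gamma_2 = \epsilon_k$.

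\emph{Outer loop, small-gain branch.} Condition~\eqref{eq: example 2 small gain} is exactly the statement that the product of the two upper bounds from~\eqref{eq: example 2 inner norm} is less than $1$. Hence $\norm{\bfH_1}\norm{\bfH_2} < 1$, which is~\eqref{eq: small gain}.

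\emph{Outer loop, small-phase branch.} Assume $\bfG_k$ is sectorial and $\epsilon_k < c(\bfG_k\inv)$. Then $\bfG_k\inv$ exists and, by~\eqref{eq: feedback short}, $\bfH_k = (\bfG_k\inv + \bfDelta_k)\inv$. I will apply Example~\ref{ex: bounds under additive perturbation} with $\bfA = \bfG_k\inv$ and $\bfDelta = \bfDelta_k$. This shows that $\bfG_k\inv + \bfDelta_k$ has Crawford number at least $c(\bfG_k\inv) - \epsilon_k > 0$, so it is sectorial and hence invertible. It also gives $\Phi(\bfG_k\inv + \bfDelta_k) \subseteq \Phi(\bfG_k\inv) + \arc[-\beta_k,\beta_k]$.

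Lemma~\ref{le: inverse bounds}, applied twice, flips signs: $\Phi(\bfH_k) = -\Phi(\bfG_k\inv+\bfDelta_k)$ and $-\Phi(\bfG_k\inv) = \Phi(\bfG_k)$. Together these give~\eqref{eq: example 2 inner sector}. Adding over $k$ yields
\[
\Phi(\bfH_1)+\Phi(\bfH_2) \subseteq \Phi(\bfG_1)+\Phi(\bfG_2)+\arc[-\beta_1-\beta_2,\beta_1+\beta_2].
\]
Enlarging a phase set by an arc of radius $\beta$ decreases its $\dist$ to $[\pi]$ by at most $\beta$. Therefore~\eqref{eq: example 2 small phase} gives $\dist([\pi], \Phi(\bfH_1)+\Phi(\bfH_2)) > 0$, which is~\eqref{eq: small phase}.

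\emph{Frequency-wise combination.} At each $\omega$ one of the two branches holds, so Theorem~\ref{th: small gain phase} applies to the outer loop with $\bfA_k = \bfH_k \in \RHinfnn$. Combined with the internal stability of the inner loops, this gives internal stability of the whole system.

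The main obstacle is the small-phase branch. Example~\ref{ex: bounds under additive perturbation} is stated without proof, so I would verify it here. Take a unit vector $\bfx$ with $z = \bfx^*\bfA\bfx \in \calW(\bfA)$ and $w = \bfx^*\bfDelta\bfx$. Then $\abs{z} \geq c(\bfA)$ and $\abs{w} \leq \epsilon$, so $\abs{z+w} \geq c(\bfA) - \epsilon > 0$. Moreover, the angle between $z+w$ and $z$ is at most $\arcsin(\abs{w}/\abs{z}) \leq \arcsin(\epsilon/c(\bfA)) = \beta$.

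A secondary care point is the metric on $\bbT$. The sets involved may be arcs of length near $\pi$, so I will state the distance-shrinking fact for arc-enlargement explicitly. It holds because $\dist([\pi], \theta + [t]) \geq \dist([\pi], \theta) - \abs{t}$ by the triangle inequality in $\bbT$, and taking the infimum over $\theta$ in the set and $\abs{t} \leq \beta$ preserves this bound.
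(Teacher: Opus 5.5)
Your proposal is correct and follows the paper's argument essentially step for step. It uses small-gain for the inner loops and the norm bound~\eqref{eq: example 2 inner norm} from Theorem~\ref{th: norm compositions}. It obtains the sector bound $\Phi(\bfH_k) = -\Phi(\bfG_k\inv + \bfDelta_k) \subseteq \Phi(\bfG_k) + \arc[-\beta_k,\beta_k]$ via Example~\ref{ex: bounds under additive perturbation} and Lemma~\ref{le: inverse bounds}, and then applies Theorem~\ref{th: small gain phase} to the outer loop. Your explicit checks of the additive-perturbation example and of the distance-shrinking property on $\bbT$ fill in details the paper leaves implicit.
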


Condition~\eqref{eq: example 2 small gain} is \emph{small-gain}, and~\eqref{eq: example 2 small phase} is \emph{small-phase}. In the nominal case ($\epsilon_1 = \epsilon_2 = 0$) Proposition~\ref{pr: example 2 result} reduces to the standard small-gain-or-phase Theorem~\ref{th: small gain phase}.

\subsection{Multiplicative Perturbation of Subsystems}
\label{se: example 3}

\begin{figure}
   \centering
   \includegraphics{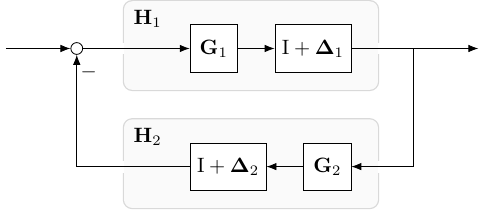}
   \caption{Example of Section~\ref{se: example 3}.}
   \label{fig: example 3}
\end{figure}

Consider the system in Fig.~\ref{fig: example 3}: a loop $\bfG_1 \rightleftharpoons \bfG_2$ affected by multiplicative perturbations $\rmI + \bfDelta_k$ ($\norm{\bfDelta_k} \leq \epsilon_k$). We treat it as a two-layer composition: feedback $\bfH_1 \rightleftharpoons \bfH_2$ where each $\bfH_k$ is the series connection of $\bfG_k$ and $\rmI + \bfDelta_k$.

Applying Theorem~\ref{th: small gain phase} to the loop $\bfH_1 \rightleftharpoons \bfH_2$ with the bounds
\begin{align}
   \text{Theorem~\ref{th: norm compositions}}
   &\implies \norm{\bfH_k} \leq (1 + \epsilon_k) \norm{\bfG_k}, \\
   \text{Theorem~\ref{th: series}}
   &\implies \Phi(\bfH_k) \subseteq \Phi(\bfG_k) + \arc[-\beta_k, \beta_k], \\
   \notag
   &\phantom{{} \implies {}} \beta_k = \arcsin\epsilon_k \\
   &\phantom{{} \implies \beta_k} + \arcsin\p[\big]{\delta(\bfG_k)\delta(\rmI + \bfDelta_k)}, \\
   \label{eq: example 3 defect bound}
   \text{Example~\ref{ex: near identity}}
   &\implies \delta(\rmI + \bfDelta_k) \leq \sqrt{\frac{2\epsilon_k}{1 - \epsilon_k}}
\end{align}
we arrive at the following.

\begin{proposition}
   \label{pr: example 3 result}
   Let $\bfG_1, \bfG_2, \bfDelta_1, \bfDelta_2 \in \RHinfnn$. The system in Fig.~\ref{fig: example 3} is internally stable if at every $\omega\in[0,\infty]$
   \begin{equation}
      \norm{\bfDelta_k(j\omega)} \leq \epsilon_k, \quad
      k = 1, 2
   \end{equation}
   and at least one of the following conditions holds:
   \begin{subequations}
      \begin{align}
         \label{eq: example 3 small gain}
         & \norm{\bfG_1} \norm{\bfG_2} < \frac{1}{(1 + \epsilon_1)(1 + \epsilon_2)} \\
         \label{eq: example 3 small phase}
         \text{or}\quad & \dist\p[\big]{[\pi], \Phi(\bfG_1) + \Phi(\bfG_2)}
         > \beta_1 + \beta_2
      \end{align}
   \end{subequations}
   where $\beta_k = \arcsin\epsilon_k + \arcsin\p*{\delta(\bfG_k) \sqrt{\dfrac{2\epsilon_k}{1 - \epsilon_k}}}$, $\bfDelta_k = \bfDelta_k(j\omega)$, and $\bfG_k = \bfG_k(j\omega)$. Condition~\eqref{eq: example 3 small phase} is only relevant if both $\bfG_k$ are sectorial and $\epsilon_k < \dfrac{1}{1 + 2\delta(\bfG_k)^2}$.
\end{proposition}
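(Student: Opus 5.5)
The proof will follow the compositional algorithm of Section~\ref{se: compositional small gain phase} on the two-layer decomposition of Fig.~\ref{fig: example 3}. The bottom level contains the series interconnections $\bfH_k = (\rmI+\bfDelta_k)\bfG_k$ (or $\bfG_k(\rmI+\bfDelta_k)$; the argument is symmetric), and the top level is the feedback loop $\bfH_1 \rightleftharpoons \bfH_2$. Series connections of $\RHinfnn$ blocks are internally stable and remain in $\RHinfnn$, so at the bottom level there is nothing to check. Internal stability of the whole system then reduces to applying Theorem~\ref{th: small gain phase} to $\bfA_k = \bfH_k(j\omega)$ at each frequency.

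For the small-gain branch, Theorem~\ref{th: norm compositions} (series) gives $\norm{\bfH_k} \leq \norm{\rmI+\bfDelta_k}\norm{\bfG_k} \leq (1+\epsilon_k)\norm{\bfG_k}$. Hence~\eqref{eq: example 3 small gain} implies $\norm{\bfH_1}\norm{\bfH_2}<1$, which is~\eqref{eq: small gain}.

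For the small-phase branch, assume both $\bfG_k$ are sectorial and $\epsilon_k < 1/(1+2\delta(\bfG_k)^2)$. Example~\ref{ex: near identity} shows that $\rmI+\bfDelta_k$ is sectorial, with $\Phi(\rmI+\bfDelta_k)\subseteq\arc[-\arcsin\epsilon_k,\arcsin\epsilon_k]$ and $\delta(\rmI+\bfDelta_k)\leq\sqrt{2\epsilon_k/(1-\epsilon_k)}$. Apply Theorem~\ref{th: series} with $N=2$. By Remark~\ref{re: two matrix series}, $d = \delta(\bfG_k)\,\hat\delta(\rmI+\bfDelta_k)$, and the hypothesis $d<1$ is where the stated restriction on $\epsilon_k$ enters. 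We need $\delta(\bfG_k)^2\cdot 2\epsilon_k/(1-\epsilon_k) < 1$, i.e.\ $\epsilon_k(1+2\delta(\bfG_k)^2)<1$, which is exactly the assumed inequality. With exact bounds $\hat\Phi = \Phi(\bfG_k)$ for $\bfG_k$, Theorem~\ref{th: series} yields
\[
\Phi(\bfH_k)\subseteq \Phi(\bfG_k)+\arc[-\arcsin\epsilon_k,\arcsin\epsilon_k]+\arc[-\arcsin d,\arcsin d] = \Phi(\bfG_k)+\arc[-\beta_k,\beta_k],
\]
where monotonicity of $\arcsin$ lets us replace $d$ by its upper bound.

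It then remains to verify~\eqref{eq: small phase}. Summing the two sector bounds gives $\Phi(\bfH_1)+\Phi(\bfH_2)\subseteq \Phi(\bfG_1)+\Phi(\bfG_2)+\arc[-(\beta_1+\beta_2),\beta_1+\beta_2]$. The one point requiring care is the phase-arc arithmetic on $\bbT$: we need that enlarging a set by $\arc[-\beta,\beta]$ reduces its $\dist$ to $[\pi]$ by at most $\beta$. This follows from the triangle inequality for the metric $\dist$ on $\bbT$, since each point of the enlarged set lies within $\beta$ of a point of the original set. Consequently~\eqref{eq: example 3 small phase} gives $\dist([\pi],\Phi(\bfH_1)+\Phi(\bfH_2))>0$. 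Theorem~\ref{th: small gain phase} then yields internal stability of the outer loop at every frequency, and combined with the stability of the inner series blocks this gives internal stability of the whole system. The main obstacle is only the bookkeeping that the domain condition $d<1$ of Theorem~\ref{th: series} matches the stated restriction on $\epsilon_k$; everything else is direct substitution.
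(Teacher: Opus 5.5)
Your proposal is correct and follows the paper's own argument. You use the same two-layer decomposition into series blocks $\bfH_k$ inside the loop $\bfH_1 \rightleftharpoons \bfH_2$, with Theorem~\ref{th: norm compositions} for the gain branch and Theorem~\ref{th: series} with Example~\ref{ex: near identity} for the phase branch. Your explicit checks are details the paper leaves implicit: that $d<1$ is equivalent to $\epsilon_k(1+2\delta(\bfG_k)^2)<1$, and that enlarging a phase set by $\arc[-\beta,\beta]$ lowers its distance to $[\pi]$ by at most $\beta$.
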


Inequality~\eqref{eq: example 3 small gain} is the classical \emph{small-gain} condition, and~\eqref{eq: example 3 small phase} is a new \emph{small-phase} one. In the nominal case ($\epsilon_1 = \epsilon_2 = 0$) Proposition~\ref{pr: example 3 result} reduces to the standard small-gain-or-phase Theorem~\ref{th: small gain phase}.

\begin{remark}[Comparison to small segmental phase]
   Segmental small-phase condition~\cite[Theorem~3]{chenCyclicSmallPhase2026} reads
   \begin{equation}
      \label{eq: example 3 segmental long}
      \dist\p[\big]{[\pi], \Psi(\bfG_1) + \Psi(\rmI + \bfDelta_1) + \Psi(\bfG_2)
      + \Psi(\rmI + \bfDelta_2)} > 0
   \end{equation}
   where $\Psi(\cdot)$ is the segmental phase. Note the inclusion
   \begin{equation}
      \label{eq: example 3 segmental arc bound}
      \Psi(\rmI + \bfDelta_k) \subseteq \arc[-\arcsin\epsilon_k, \arcsin\epsilon_k]
   \end{equation}
   that can be shown by the method of Appendix~\ref{ap: proof cinv near identity} making use of Lemma~\ref{le: segmental radius}. Therefore, \eqref{eq: example 3 segmental long} follows from
   \begin{equation}
      \label{eq: example 3 segmental}
      \dist\p[\big]{[\pi], \Psi(\bfG_1) + \Psi(\bfG_2)}
      > \arcsin\epsilon_1 + \arcsin\epsilon_2
   \end{equation}
   which is a segmental alternative to~\eqref{eq: example 3 small phase}. Even though the right-hand side of~\eqref{eq: example 3 segmental} is smaller than~\eqref{eq: example 3 small phase}, the left-hand side is smaller as well since $\Phi(\cdot) \subseteq \Psi(\cdot)$, so neither~\eqref{eq: example 3 small phase} nor~\eqref{eq: example 3 segmental} generally dominates the other. However, in the non-sectorial case~\eqref{eq: example 3 small phase} is simply inapplicable.
\end{remark}

\begin{remark}[Sectorial-segmental approach]
   Let us try the hybrid sectorial-segmental idea of Section~\ref{se: normalized range} in this example. Recall that Proposition~\ref{pr: example 3 result} is obtained with $\delta(\rmI + \bfDelta_k)$, but by Proposition~\ref{pr: normalized defect} we could use $\delta_{\rm n}(\cdot)$ instead~-- how significant an advantage would it be for small $\epsilon_k$? From~\eqref{eq: example 3 defect bound}
   \begin{equation}
      \delta(\rmI + \bfDelta_k) \leq O(\sqrt{\epsilon_k})
      \quad\text{as } \epsilon_k \to 0
   \end{equation}
   and from~\eqref{eq: example 3 segmental arc bound} by Lemma~\ref{le: segmental radius}
   \begin{equation}
      \delta_{\rm n}(\rmI + \bfDelta_k) \leq \tan\arcsin\epsilon_k
      = O(\epsilon_k) = o(\sqrt{\epsilon_k}),
   \end{equation}
   so the hybrid approach with $\delta_{\rm n}$ is qualitatively sharper.
\end{remark}

\section{Conclusions}

We proposed a way to apply the small-gain-or-phase stability analysis to a class of systems built by a hierarchy of series, parallel, and feedback interconnections. The analysis is based on the composition of the ring-sectorial bounds of numerical ranges. Although the composition formulas can be complicated and conservative, in simple examples they yield analytically manageable and intuitively convincing results.

A potentially attractive feature of the compositional analysis, facilitated by its iterative nature, is the opportunity to recognize the \emph{source} of instability~-- an internal loop that violates the stability conditions. This may prove useful in applications concerned with controller design and system diagnostics.

Possible future advances are discussed in Section~\ref{se: normalized range} (combining the sectorial and segmental approaches) and in Remark~\ref{re: deviation bound} (incorporating Drury's deviation bound or its approximation).

\appendices

\section{A Fundamental Lemma}
\label{ap: lemma}

In the proofs we often adopt a geometric point of view on the numerical range. Note that every point $z \in \calW(\bfA)$ contains some information about the action of $\bfA$ on $\bfx$. Specifically, $\bfA\bfx$ can be decomposed into the scaled component $z\bfx$ and a component orthogonal to $\bfx$. The orthogonal component is discarded from $\calW(\bfA)$, but in the sectorial case it can be bounded proportionally to $\abs{z}$ using the Crawford defect $\delta(\bfA)$~-- this is the essence of the following lemma.

\begin{lemma}
   \label{le: orthogonal bound}
   Let matrix $\bfA \in \bbC^{n\times n}$ be sectorial. For an arbitrary $\bfx \in \bbC^n \setminus \set{0}$, define $z = \dfrac{\bfx^* \bfA \bfx}{\norm{\bfx}^2} \in \calW(\bfA)$ and the orthogonal projection matrix $\bfP = \rmI - \dfrac{\bfx \bfx^*}{\norm{\bfx}^2} \in \bbC^{n\times n}$. Then
   \begin{equation}
      \frac{\norm{\bfP\bfA\bfx}}{\norm{\bfx}} \leq \delta(\bfA) \abs{z}.
   \end{equation}
\end{lemma}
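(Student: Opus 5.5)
The plan is a direct computation: decompose $\bfA\bfx$ into its component along $\bfx$ and the orthogonal remainder, apply Pythagoras, and bound $\norm{\bfA\bfx}^2$ through the Crawford number of $\bfA\inv$. Sectoriality of $\bfA$ guarantees that $\bfA\inv$ exists and is sectorial, so $c(\bfA)>0$, $c(\bfA\inv)>0$ and $\delta(\bfA)$ is well defined.

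\emph{Step 1 (Pythagoras).} Since $\bfI-\bfP$ is the orthogonal projection onto $\operatorname{span}\{\bfx\}$, we have $(\rmI-\bfP)\bfA\bfx = \frac{\bfx\bfx^*\bfA\bfx}{\norm{\bfx}^2} = z\bfx$. Hence $\bfA\bfx = z\bfx + \bfP\bfA\bfx$ with $\bfP\bfA\bfx \perp \bfx$. Therefore
\begin{equation*}
   \norm{\bfP\bfA\bfx}^2 = \norm{\bfA\bfx}^2 - \abs{z}^2\norm{\bfx}^2 .
\end{equation*}

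\emph{Step 2 (bound $\norm{\bfA\bfx}$ through $c(\bfA\inv)$).} This is the only nontrivial idea. Put $\bfy=\bfA\bfx$, which is nonzero because $\bfA$ is invertible. Then
\begin{equation*}
   \bfy^*\bfA\inv\bfy = \bfx^*\bfA^*\bfx = \overline{\bfx^*\bfA\bfx}.
\end{equation*}
By the definition of the Crawford number, $\abs{\bfy^*\bfA\inv\bfy}\ge c(\bfA\inv)\norm{\bfy}^2$. This gives
\begin{equation*}
   \norm{\bfA\bfx}^2\le \frac{\abs{z}\,\norm{\bfx}^2}{c(\bfA\inv)} .
\end{equation*}

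\emph{Step 3 (use $c(\bfA)$).} Since $z\in\calW(\bfA)$, we have $\abs{z}\ge c(\bfA)>0$, so $\abs{z}\le \abs{z}^2/c(\bfA)$. Inserting this into Step~2 and then into Step~1 gives
\begin{equation*}
   \norm{\bfP\bfA\bfx}^2\le \abs{z}^2\norm{\bfx}^2\Bigl(\frac{1}{c(\bfA)c(\bfA\inv)}-1\Bigr)=\delta(\bfA)^2\abs{z}^2\norm{\bfx}^2 .
\end{equation*}
Taking square roots yields the claim.

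The main obstacle is Step~2, namely recognizing that the quadratic form of $\bfA\inv$ at $\bfA\bfx$ equals the conjugate of the quadratic form of $\bfA$ at $\bfx$. Once this is seen, the rest is routine.

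Equality in the lemma requires both inequalities in Steps~2 and~3 to be tight at the same $\bfx$. This is exactly the simultaneous-minimum condition~\eqref{eq: simultaneous minimum} discussed in Section~\ref{se: tightness}, which is consistent with this plan.
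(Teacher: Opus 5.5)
Your proof is correct and follows the paper's argument essentially verbatim. Both use the orthogonal decomposition $\bfA\bfx = z\bfx + \bfP\bfA\bfx$ with Pythagoras, then bound $\norm{\bfA\bfx}^2/\norm{\bfx}^2 \leq \abs{z}^2/(c(\bfA)c(\bfA\inv))$ via $\abs{\bfx^*\bfA\bfx}/\norm{\bfx}^2 \geq c(\bfA)$ and $\abs{\bfx^*\bfA\bfx}/\norm{\bfA\bfx}^2 \geq c(\bfA\inv)$; the paper just packages your Steps~2 and~3 into a single ratio.
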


\begin{IEEEproof}
   From the orthogonal decomposition
   \begin{equation}
      \frac{\bfA\bfx}{\norm{\bfx}}
      = \frac{z\bfx}{\norm{\bfx}}
      + \frac{\bfP \bfA \bfx}{\norm{\bfx}}
   \end{equation}
   we find $\p*{\dfrac{\norm{\bfA\bfx}}{\norm{\bfx}}}^2 = \abs{z}^2 + \p*{\dfrac{\norm{\bfP\bfA\bfx}}{\norm{\bfx}}}^2$ which together with
   \begin{equation}
      \label{eq: problematic ratio}
      \p*{\frac{\norm{\bfA\bfx}}{\norm{\bfx}}}^2
      = \frac{\p*{\dfrac{\abs{\bfx^* \bfA \bfx}}{\norm{\bfx}^2}}^2}{
         \dfrac{\abs{\bfx^* \bfA \bfx}}{\norm{\bfx}^2}
         \dfrac{\abs{\bfx^* \bfA \bfx}}{\norm{\bfA\bfx}^2}
      } \leq \frac{\abs{z}^2}{c(\bfA) c(\bfA\inv)}
   \end{equation}
   yields the required estimate.
\end{IEEEproof}

\begin{remark}[Are there no better bounds?]
   \label{re: deviation bound}
   The idea of bounding the $\bfx$-orthogonal component of $\bfA\bfx$ is not new. We particularly note~\cite[Theorem~8]{druryNumericalRangeMatrix2024} where the departure of $\calW(\bfA_1 \bfA_2)$ from $\calW(\bfA_1) \calW(\bfA_2)$ is estimated using a quantity called the \emph{deviation bound}~-- a tight bound on the orthogonal component that does not require sectoriality. At first sight, the deviation bound is difficult to attain whereas our Crawford defect $\delta(\cdot)$ can be easily read off $\calS(\cdot)$. It may be worthwhile to characterize the orthogonal component in a way that is less conservative than $\delta(\cdot)$ (or $\delta_{\rm n}(\cdot)$ from Section~\ref{se: normalized range}) but more computable than the deviation bound.
\end{remark}

\section{Proof of~\eqref{eq: cinv near identity} in Example~\ref{ex: near identity}}
\label{ap: proof cinv near identity}

We need to lower-bound $c(\bfA\inv)$ where $\bfA = \rmI + \bfDelta$ and $\norm{\bfDelta} \leq \epsilon < 1$. Consider the orthogonal decomposition
\begin{equation}
   \bfDelta \bfx = z\bfx + \bfy, \quad z\in\bbC, \quad \bfy\perp\bfx.
\end{equation}
Then $\abs{z} \leq \epsilon$,
\begin{equation}
   \p*{\frac{\norm{\bfy}}{\norm{\bfx}}}^2
   = \p*{\frac{\norm{\bfDelta\bfx}}{\norm{\bfx}}}^2 - \abs{z}^2
   \leq \epsilon^2 - \abs{z}^2,
\end{equation}
and
\begin{subequations}
   \begin{align}
      c(\bfA\inv)
      &= \min_{\bfx\neq 0} \frac{\abs{\bfx^* \bfA \bfx}}
      {\norm{\bfA\bfx}^2} \\
      &= \min_{\bfx\neq 0} \frac{\abs{1+z} \norm{\bfx}^2}
      {\abs{1+z}^2 \norm{\bfx}^2 + \norm{\bfy}^2} \\
      &\geq \min_{\abs{z} \leq \epsilon} \frac{\abs{1+z}}
      {\abs{1+z}^2 - \abs{z}^2 + \epsilon^2} \\
      &\geq \min_{\abs{z} \leq \epsilon} \frac{1 + \real z}
      {1 + 2\real z + \epsilon^2}
      = \frac{1}{1 + \epsilon}.
   \end{align}
\end{subequations}

\section{Proof of Theorem~\ref{th: series}}
\label{ap: proof series}

Take arbitrary $\bfx \in \bbC^n \setminus \set{0}$ and let for $k = 1, 2, \dots, N$
\begin{align}
   z_k &= \frac{\bfx^* \bfA_k \bfx}{\norm{\bfx}^2} \in \calW(\bfA_k), \\
   w_k &= \frac{\bfx^* \bfA_k \bfA_{k-1} \dots \bfA_1 \bfx}{\norm{\bfx}^2},
   \quad w_N \in \calW(\bfB), \\
   \bfy_k &= \frac{\bfP \bfA_k \bfx}{\norm{\bfx}} \perp \bfx, \\
   \tilde\bfy_k^* &= \frac{\bfx^* \bfA_k \bfP}{\norm{\bfx}} \perp \bfx, \\
   \bfv_k &= \frac{\bfP \bfA_k \bfA_{k-1} \dots \bfA_1 \bfx}{\norm{\bfx}} \perp \bfx
\end{align}
where $\bfP$ is the projection matrix from Lemma~\ref{le: orthogonal bound}. Observe the iterative relations
\begin{subequations}
   \label{eq: series iterations}
   \begin{alignat}{3}
      w_k &= z_k w_{k-1} &&+ \tilde\bfy_k^* \bfv_{k-1}^{}, \\
      \bfv_k &= \bfy_k w_{k-1} &&+ \bfP \bfA_k \bfv_{k-1}.
   \end{alignat}
\end{subequations}
We aim to estimate the relative departure of $w_k$ from $z_1 z_2 \dots z_k$. Accordingly, let
\begin{align}
   \omega_k &= \frac{w_k}{z_1 z_2 \dots z_k} - 1, \\
   \bm\upsilon_k &= \frac{\bfv_k}{z_1 z_2 \dots z_k}.
\end{align}
Equations~\eqref{eq: series iterations} turn into
\begin{subequations}
   \label{eq: series iterations normalized}
   \begin{alignat}{3}
      \omega_k &= \omega_{k-1} &&+ \frac{\tilde\bfy_k^*}{z_k}\bm\upsilon_{k-1}, \\
      \bm\upsilon_k &= (\omega_{k-1} + 1) \frac{\bfy_k}{z_k}
      &&+ \frac{\bfP \bfA_k}{z_k} \bm\upsilon_{k-1}.
   \end{alignat}
\end{subequations}
From Lemma~\ref{le: orthogonal bound} and Remark~\ref{re: defect as spread}
\begin{subequations}
   \begin{gather}
      \norm[\Big]{
         \frac{\bfy_k}{z_k}
      } \leq \delta(\bfA_k), \quad
      \norm[\Big]{
         \frac{\tilde\bfy_k}{z_k}
      } \leq \delta(\bfA_k^*) = \delta(\bfA_k), \\
      \norm[\bigg]{
         \frac{\bfP \bfA_k}{z_k}
      } \leq \frac{\norm{\bfA_k}}{c(\bfA_k)}
      \leq 1 + \delta^2(\bfA_k),
   \end{gather}
\end{subequations}
and then from~\eqref{eq: series iterations normalized}
\begin{alignat}{3}
   \abs{\omega_k} &\leq \abs{\omega_{k-1}}
   &&+ \delta(\bfA_k) \norm{\bm\upsilon_{k-1}}, \\
   \norm{\bm\upsilon_k} &\leq \delta(\bfA_k) (\abs{\omega_{k-1}} + 1)
   &&+ \p[\big]{1 + \delta^2(\bfA_k)} \norm{\bm\upsilon_{k-1}}.
\end{alignat}
Thus,
\begin{equation}
   \abs{\omega_k} + 1 \leq a_k, \quad
   \norm{\bm\upsilon_k} \leq b_k
\end{equation}
where $a_k$ and $b_k$ satisfy
\begin{equation}
   \label{eq: series bounds iterations}
   \begin{bmatrix}
      a_k \\ b_k
   \end{bmatrix} = \underbrace{\begin{bmatrix}
      1 & \delta(\bfA_k) \\ \delta(\bfA_k) & 1+\delta^2(\bfA_k)
   \end{bmatrix}}_{\bfD_k} \begin{bmatrix}
      a_{k-1} \\ b_{k-1}
   \end{bmatrix}.
\end{equation}
The initial conditions can be $a_1 = 1$ and $b_1 = \delta(\bfA_1)$ because
\begin{subequations}
   \begin{align}
      \abs{\omega_1} &= \abs[\Big]{
         \frac{w_1}{z_1} - 1
      } = 0, \\
      \norm{\bm\upsilon_1} &= \norm[\Big]{
         \frac{\bfv_1}{z_1}
      } = \norm[\Big]{
         \frac{\bfy_1}{z_1}
      } \leq \delta(\bfA_1).
   \end{align}
\end{subequations}
Then~\eqref{eq: series bounds iterations} resolves into
\begin{subequations}
   \begin{align}
      \begin{bmatrix}
         a_N \\ b_N
      \end{bmatrix} &= \bfD_N \bfD_{N-1} \dots \bfD_2 \begin{bmatrix}
         1 \\ \delta(\bfA_1)
      \end{bmatrix} \\
      &= \bfD_N \bfD_{N-1} \dots \bfD_1 \begin{bmatrix}
         1 \\ 0
      \end{bmatrix}.
   \end{align}
\end{subequations}
Returning to $w$ and $z$,
\begin{equation}
   \abs[\Big]{
      \frac{w_N}{z_1 z_2 \dots z_N} - 1
   } \leq \underbrace{\begin{bmatrix}
      1 & 0
   \end{bmatrix} \bfD_N \bfD_{N-1} \dots \bfD_1 \begin{bmatrix}
      1 \\ 0
   \end{bmatrix} - 1}_d
\end{equation}
and, assuming $d < 1$,
\begin{equation}
   (1-d) \prod_{k=1}^N \abs{z_k}
   \leq \abs{w_N} \leq (1+d) \prod_{k=1}^N \abs{z_k}
\end{equation}
and
\begin{equation}
   \dist\p[\bigg]{
      \arg w_N, \: \sum_{k=1}^N \arg z_k
   } \leq \arcsin d
\end{equation}
which implies the bounds~\eqref{eq: bounds series}.

\section{Proof of~\eqref{eq: parallel bounds c inv} in Theorem~\ref{th: parallel}}
\label{ap: proof parallel}

Take arbitrary $\bfx \in \bbC^n \setminus \set{0}$ and let for $k = 1, 2, \dots, N$
\begin{subequations}
   \begin{align}
      z_k &= \frac{\bfx^* \bfA_k \bfx}{\norm{\bfx}^2} \in \calW(\bfA_k), \\
      w &= \frac{\bfy^* \bfB\inv \bfy}{\norm{\bfy}^2} \in \calW(\bfB\inv).
   \end{align}
\end{subequations}
We aim to lower-bound $\abs{w}$. With $\bfy = \bfB \bfx$
\begin{equation}
   \label{eq: proof parallel w}
   w = \frac{\bfx^* \bfB^* \bfx}{\norm{\bfB \bfx}^2}
   = \p*{\frac{\norm{\bfx}}{\norm{\bfB \bfx}}}^2 \p[\bigg]{
      \sum_{k=1}^N z_k
   }^*
\end{equation}
where
\begin{equation}
   \p*{\frac{\norm{\bfB \bfx}}{\norm{\bfx}}}^2
   = \sum_{k=1}^N \p*{\frac{\norm{\bfA_k \bfx}}{\norm{\bfx}}}^2
   + \sum_{k \neq \ell} \frac{\bfx^* \bfA_k^* \bfA_\ell^{} \bfx}{\norm{\bfx}^2}.
\end{equation}
Using the identities
\begin{gather}
   \bfx^* \bfA_k^* \bfA_\ell^{} \bfx
   = z_k^* z_\ell^{} \norm{\bfx}^2 + \bfx^* \bfA_k^* \bfP \bfA_\ell^{} \bfx, \\
   \sum_{k \neq \ell} z_k^* z_\ell^{}
   = \abs[\bigg]{\sum_{k=1}^N z_k}^2 - \sum_{k=1}^N \abs{z_k}^2, \\
   \p*{\frac{\norm{\bfA_k \bfx}}{\norm{\bfx}}}^2 - \abs{z_k}^2
   = \p*{\frac{\norm{\bfP \bfA_k \bfx}}{\norm{\bfx}}}^2
\end{gather}
where $\bfP$ is the projection matrix from Lemma~\ref{le: orthogonal bound}, we obtain
\begin{multline}
   \p*{\frac{\norm{\bfB \bfx}}{\norm{\bfx}}}^2
   = \abs[\bigg]{\sum_{k=1}^N z_k}^2
   + \sum_{k=1}^N \p*{\frac{\norm{\bfP \bfA_k \bfx}}{\norm{\bfx}}}^2 \\
   + \sum_{k \neq \ell} \p*{
      \frac{\bfP \bfA_k \bfx}{\norm{\bfx}}
   }^* \p*{
      \frac{\bfP \bfA_\ell \bfx}{\norm{\bfx}}
   }
\end{multline}
and by Lemma~\ref{le: orthogonal bound}
\begin{equation}
   \p*{\frac{\norm{\bfB \bfx}}{\norm{\bfx}}}^2
   \leq \abs[\bigg]{\sum_{k=1}^N z_k}^2
   + \p[\bigg]{\sum_{k=1}^N \delta(\bfA_k) \abs{z_k}}^2.
\end{equation}
Then from~\eqref{eq: proof parallel w}
\begin{subequations}
   \begin{align}
      \abs{w} &\geq \frac{1}{
         \abs[\Big]{\sum\limits_{k=1}^N z_k}
         + \dfrac{\p[\Big]{
            \sum\limits_{k=1}^N \delta(\bfA_k) \abs{z_k}
         }^2}{\abs[\Big]{\sum\limits_{k=1}^N z_k}}
      } \\
      &\geq \frac{1}{
         r(\bfB) + \dfrac{1}{c(\bfB)}\p[\Big]{
            \sum\limits_{k=1}^N \delta(\bfA_k) r(\bfA_k)
         }^2
      }
   \end{align}
\end{subequations}
implying~\eqref{eq: parallel bounds c inv}.

\section{Proof of Proposition~\ref{pr: normalized defect}}
\label{ap: proof normalized defect}

It is sufficient to show that Lemma~\ref{le: orthogonal bound} holds with $\delta$ replaced by $\delta_{\rm n}$. This is justified by replacing the estimation~\eqref{eq: problematic ratio} with
\begin{equation}
   \label{eq: improved ratio}
   \p*{\frac{\norm{\bfA\bfx}}{\norm{\bfx}}}^2
   = \frac{\p*{\dfrac{\abs{\bfx^* \bfA \bfx}}{\norm{\bfx}^2}}^2}
   {\p*{\dfrac{\abs{\bfx^* \bfA \bfx}}{\norm{\bfx}\norm{\bfA\bfx}}}^2}
   \leq \frac{\abs{z}^2}{c_{\rm n}(\bfA)^2}.
\end{equation}

The inequality $\delta_{\rm n}(\bfA) \leq \delta(\bfA)$ follows from
\begin{equation}
   c_{\rm n}(\bfA)^2
   = \p*{\frac{\abs{\bm\xi^* \bfA \bm\xi}}
   {\norm{\bm\xi} \norm{\bfA\bm\xi}}}^2
   = \frac{\abs{\bm\xi^* \bfA \bm\xi}}
   {\norm{\bm\xi}^2}
   \frac{\abs{\bm\xi^* \bfA \bm\xi}}
   {\norm{\bfA\bm\xi}^2}
   \geq c(\bfA) c(\bfA\inv)
\end{equation}
which holds with some $\bm\xi \in \bbC^n$.

If $\bfA = \gamma \bfU$ with $\gamma > 0$ and unitary $\bfU$ then $c(\bfA) = \gamma c(\bfU)$, $c(\bfA\inv) = \gamma\inv c(\bfU)$, $c_{\rm n}(\bfA) = c(\bfU)$, $c_{\rm n}(\bfA)^2 = c(\bfA) c(\bfA\inv)$, and $\delta_{\rm n}(\bfA) = \delta(\bfA)$.

\section{Proof of Lemma~\ref{le: segmental radius}}
\label{ap: proof segmental radius}

As noted in the main text, $\calW_{\rm n}(\cdot)$ is ``convex as seen from the origin\rlap{.}'' Thus, \eqref{eq: normalized Crawford} can be reformulated as
\begin{equation}
   c_{\rm n}(\bfA) = \min_{z \in \conv\calW_{\rm n}(\bfA)} \abs{z}
\end{equation}
or, with the dual representation $\abs{z} = \max\limits_{\theta\in[0,2\pi]} \real(\ee^{j\theta} z)$, as
\begin{equation}
   c_{\rm n}(\bfA) = \min_{z \in \conv\calW_{\rm n}(\bfA)} \:
   \max_{\theta\in[0,2\pi]} \real(\ee^{j\theta} z).
\end{equation}
By a minimax theorem (e.g., \cite[Theorem~3.1.30]{nesterovLecturesConvexOptimization2018}), the minimum and maximum operators can be swapped:
\begin{equation}
   \label{eq: normalized Crawford dual}
   c_{\rm n}(\bfA) = \max_{\theta\in[0,2\pi]} \:
   \min_{z \in \conv\calW_{\rm n}(\bfA)} \real(\ee^{j\theta} z).
\end{equation}
On the other hand, \eqref{eq: minimal segment} is equivalent to
\begin{equation}
   \tfrac12\len\Psi(\bfA) = \arccos \max_{\theta\in[0,2\pi]} \:
   \min_{z \in \conv\calW_{\rm n}(\bfA)} \real(\ee^{j\theta} z).
\end{equation}
Comparing the latter to~\eqref{eq: normalized Crawford dual}, we conclude~\eqref{eq: segmental radius}.

\bibliography{paper}

@book{zhouRobustOptimalControl1996,
  title = {Robust and optimal control},
  author = {Zhou, Kemin and Doyle, John C. and Glover, Keith},
  year = 1996,
  publisher = {Prentice Hall},
  address = {Upper Saddle River, NJ},
  isbn = {978-0-13-456567-5},
  langid = {english}
}

@article{zamesInputoutputStabilityTimevarying1966a,
  title = {On the input-output stability of time-varying nonlinear feedback systems -- {{Part I}}: {{Conditions}} derived using concepts of loop gain, conicity, and positivity},
  author = {Zames, G.},
  year = 1966,
  journal = {IEEE Transactions on Automatic Control},
  volume = {11},
  number = {2},
  pages = {228--238},
  note = {doi:10.1109/TAC.1966.1098316}
}

@article{huangGainPhaseDecentralized2024,
  title = {Gain and phase: decentralized stability conditions for power electronics-dominated power systems},
  shorttitle = {Gain and phase},
  author = {Huang, Linbin and Wang, Dan and Wang, Xiongfei and Xin, Huanhai and Ju, Ping and Johansson, Karl H. and D{\"o}rfler, Florian},
  year = 2024,
  journal = {IEEE Transactions on Power Systems},
  volume = {39},
  number = {6},
  pages = {7240--7256},
  note = {doi:10.1109/TPWRS.2024.3380528}
}

@book{vidyasagarInputoutputAnalysisLargescale1981,
  title = {Input-output analysis of large-scale interconnected systems},
  author = {Vidyasagar, M.},
  year = 1981,
  publisher = {Springer-Verlag},
  address = {Berlin/Heidelberg},
  note = {doi:10.1007/BFb0044060},
  isbn = {978-3-540-10501-5},
  langid = {english}
}

@book{vidyasagarNonlinearSystemsAnalysis1993,
  title = {Nonlinear systems analysis},
  author = {Vidyasagar, M.},
  year = 1993,
  publisher = {Prentice Hall},
  address = {Englewood Cliffs, N.J},
  isbn = {978-0-13-623463-0},
  lccn = {QA402 .V53 1993}
}

@article{chaffeyHomotopyTheoremIncremental2026,
  title = {A homotopy theorem for incremental stability},
  author = {Chaffey, Thomas and Kharitenko, Andrey and Forni, Fulvio and Sepulchre, Rodolphe},
  year = 2026,
  journal = {IEEE Transactions on Automatic Control},
  volume = {71},
  number = {4},
  pages = {2740--2745},
  note = {doi:10.1109/TAC.2025.3632433}
}

@misc{krebbekxGraphicalAnalysisNonlinear2025,
  title = {Graphical analysis of nonlinear multivariable feedback systems},
  author = {Krebbekx, Julius P. J. and T{\'o}th, Roland and Das, Amritam},
  year = 2025,
  number = {arXiv:2507.16513},
  eprint = {2507.16513},
  primaryclass = {eess},
  publisher = {arXiv},
  note = {doi:10.48550/arXiv.2507.16513},
  archiveprefix = {arXiv}
}

@article{wangPhasesComplexMatrix2020,
  title = {On the phases of a complex matrix},
  author = {Wang, Dan and Chen, Wei and Khong, Sei Zhen and Qiu, Li},
  year = 2020,
  journal = {Linear Algebra and its Applications},
  volume = {593},
  pages = {152--179},
  note = {doi:10.1016/j.laa.2020.01.035}
}

@misc{zhaoWhenSmallGain2022,
  title = {When small gain meets small phase},
  author = {Zhao, Di and Chen, Wei and Qiu, Li},
  year = 2022,
  number = {arXiv:2201.06041},
  eprint = {2201.06041},
  primaryclass = {eess},
  publisher = {arXiv},
  note = {doi:10.48550/arXiv.2201.06041},
  archiveprefix = {arXiv}
}

@book{hornTopicsMatrixAnalysis1991,
  title = {Topics in matrix analysis},
  author = {Horn, Roger A. and Johnson, Charles R.},
  year = 1991,
  publisher = {Cambridge University Press},
  note = {doi:10.1017/CBO9780511840371},
  isbn = {978-0-521-30587-7 978-0-521-46713-1 978-0-511-84037-1}
}

@article{stewartPertubationBoundsDefinite1979,
  title = {Perturbation bounds for the definite generalized eigenvalue problem},
  author = {Stewart, G. W.},
  year = 1979,
  journal = {Linear Algebra and its Applications},
  volume = {23},
  pages = {69--85},
  note = {doi:10.1016/0024-3795(79)90094-6}
}

@article{auzingerSectorialOperatorsNormalized2003,
  title = {Sectorial operators and normalized numerical range},
  author = {Auzinger, W.},
  year = 2003,
  journal = {Applied Numerical Mathematics},
  volume = {45},
  number = {4},
  pages = {367--388},
  note = {doi:10.1016/S0168-9274(02)00254-4}
}

@book{katoPerturbationTheoryLinear1984,
  title = {Perturbation theory for linear operators},
  author = {Kato, Tosio},
  year = 1984,
  edition = {2. corr. print. of the 2. ed},
  number = {132},
  publisher = {Springer},
  address = {Berlin Heidelberg},
  isbn = {978-0-387-07558-7 978-3-540-07558-5},
  langid = {english}
}

@article{druryNumericalRangeMatrix2024,
  title = {The numerical range of matrix products},
  author = {Drury, Stephen},
  year = 2024,
  journal = {The Electronic Journal of Linear Algebra},
  volume = {40},
  pages = {307--321},
  note = {doi:10.13001/ela.2024.8491},
  langid = {english}
}

@article{johnsonNumericalDeterminationField1978,
  title = {Numerical determination of the field of values of a general complex matrix},
  author = {Johnson, Charles R.},
  year = 1978,
  journal = {SIAM Journal on Numerical Analysis},
  volume = {15},
  number = {3},
  pages = {595--602},
  publisher = {{Society for Industrial and Applied Mathematics}},
  note = {doi:10.1137/0715039}
}

@article{chenCyclicSmallPhase2026,
  title = {A cyclic small phase theorem},
  author = {Chen, Chao and Chen, Wei and Zhao, Di and Chen, Jianqi and Qiu, Li},
  year = 2026,
  journal = {IEEE Transactions on Automatic Control},
  volume = {71},
  number = {3},
  pages = {1676--1691},
  note = {doi:10.1109/TAC.2025.3617287}
}

@article{hochstenbachNumericalApproximationField2013,
  title = {Numerical approximation of the field of values of the inverse of a large matrix},
  author = {Hochstenbach, Michiel E and Singer, David A and Zachlin, Paul F},
  year = 2013,
  journal = {CASA-report},
  volume = {1308},
  publisher = {Technische Universiteit Eindhoven},
  langid = {english}
}

@article{druryNumericalRadiusInequality2024,
  title = {A numerical radius inequality for sector operators},
  author = {Drury, Stephen},
  year = 2024,
  journal = {Linear Algebra and its Applications},
  volume = {687},
  pages = {108--116},
  note = {doi:10.1016/j.laa.2024.01.019}
}

@article{linsNormalizedNumericalRange2018,
  title = {The normalized numerical range and the {{Davis}}--{{Wielandt}} shell},
  author = {Lins, Brian and Spitkovsky, Ilya M. and Zhong, Siyu},
  year = 2018,
  journal = {Linear Algebra and its Applications},
  volume = {546},
  pages = {187--209},
  note = {doi:10.1016/j.laa.2018.01.027}
}

@article{tsypkinHighgainRobustControl1999,
  title = {High-gain robust control},
  author = {Tsypkin, {\relax Ya}. Z. and Polyak, B. T.},
  year = 1999,
  journal = {European Journal of Control},
  volume = {5},
  number = {1},
  pages = {3--9},
  note = {doi:10.1016/S0947-3580(99)70132-9}
}

@article{donoghueNumericalRangeBounded1957,
  title = {On the numerical range of a bounded operator},
  author = {Donoghue, William F.},
  year = 1957,
  journal = {Michigan Mathematical Journal},
  volume = {4},
  number = {3},
  note = {doi:10.1307/mmj/1028997958}
}

@article{bodeRelationsAttenuationPhase1940,
  title = {Relations between attenuation and phase in feedback amplifier design},
  author = {Bode, H. W.},
  year = 1940,
  journal = {The Bell System Technical Journal},
  volume = {19},
  number = {3},
  pages = {421--454},
  note = {doi:10.1002/j.1538-7305.1940.tb00839.x}
}

@book{hallAnalysisSynthesisLinear1943,
  title = {The analysis and synthesis of linear servomechanisms},
  author = {Hall, Albert C.},
  year = 1943,
  publisher = {The MIT Press},
  note = {doi:10.7551/mitpress/1260.001.0001},
  isbn = {978-0-262-31078-9},
  langid = {english}
}

@article{janssenMakingGraphsReducible1997,
  title = {Making graphs reducible with controlled node splitting},
  author = {Janssen, Johan and Corporaal, Henk},
  year = 1997,
  journal = {ACM Trans. Program. Lang. Syst.},
  volume = {19},
  number = {6},
  pages = {1031--1052},
  note = {doi:10.1145/267959.269971}
}

@book{nesterovLecturesConvexOptimization2018,
  title = {Lectures on convex optimization},
  author = {Nesterov, Yurii},
  year = 2018,
  volume = {137},
  publisher = {Springer International Publishing},
  address = {Cham},
  note = {doi:10.1007/978-3-319-91578-4},
  isbn = {978-3-319-91577-7 978-3-319-91578-4}
}

@book{ilchmannNonidentifierbasedHighgainAdaptive1993,
  title = {Non-identifier-based high-gain adaptive control},
  author = {Ilchmann, Achim},
  year = 1993,
  publisher = {Springer-Verlag},
  address = {London},
  note = {doi:10.1007/BFb0032266},
  isbn = {978-3-540-19845-1},
  langid = {english}
}

@article{kreinAngularLocalizationSpectrum1969,
  title = {Angular localization of the spectrum of a multiplicative integral in a {{Hilbert}} space},
  author = {Krein, M. G.},
  year = 1969,
  journal = {Functional Analysis and Its Applications},
  volume = {3},
  number = {1},
  pages = {73--74},
  note = {doi:10.1007/BF01078278},
  langid = {english}
}

@article{mauryaMarineVehiclePath2009,
  title = {Marine vehicle path following using inner-outer loop control},
  author = {Maurya, P. and Aguiar, A. Pedro and Pascoal, A.},
  year = 2009,
  journal = {IFAC Proceedings Volumes},
  series = {8th {{IFAC Conference}} on {{Manoeuvring}} and {{Control}} of {{Marine Craft}}},
  volume = {42},
  number = {18},
  pages = {38--43},
  note = {doi:10.3182/20090916-3-BR-3001.0071}
}

@article{bergerFunnelControlNonlinear2021,
  title = {Funnel control of nonlinear systems},
  author = {Berger, Thomas and Ilchmann, Achim and Ryan, Eugene P.},
  year = 2021,
  journal = {Mathematics of Control, Signals, and Systems},
  volume = {33},
  number = {1},
  pages = {151--194},
  note = {doi:10.1007/s00498-021-00277-z},
  langid = {english}
}

@article{dashkovskiySmallGainTheorems2010,
  title = {Small gain theorems for large scale systems and construction of {{ISS Lyapunov}} functions},
  author = {Dashkovskiy, Sergey N. and R{\"u}ffer, Bj{\"o}rn S. and Wirth, Fabian R.},
  year = 2010,
  journal = {SIAM Journal on Control and Optimization},
  volume = {48},
  number = {6},
  pages = {4089--4118},
  publisher = {{Society for Industrial and Applied Mathematics}},
  note = {doi:10.1137/090746483}
}

\end{document}